\newtheorem{theorem}{Theorem}
\newtheorem{definition}{Definition}
\newcommand{\real}{\mathbb{R}}
\newcommand{\subscr}[2]{#1_{\textup{#2}}}
\newcommand{\En}[1]{\mathrm{E}}
\DeclareSymbolFont{bbold}{U}{bbold}{m}{n}
\DeclareSymbolFontAlphabet{\mathbbold}{bbold}
\newcommand{\vect}[1]{\mathbbold{#1}}
\newcommand{\vectorones}[1][]{\vect{1}_{#1}}
\newcommand{\vectorzeros}[1][]{\vect{0}_{#1}}
\definecolor{gnblue4}{RGB}{0,108,212} % 006CD4
\title{Input-Driven Dynamics for Robust Memory Retrieval in Hopfield Networks}
\author{ \href{https://orcid.org/0009-0000-3444-0838}{\includegraphics[scale=0.06]{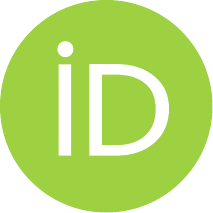}\hspace{1mm}Simone Betteti} \\
	Department of Information Engineering\\
	Università degli Studi di Padova\\
	Padova, 35131, IT \\
	\texttt{bettetisim[at]dei.unipd.it} \\
	%% examples of more authors
	\And
	\href{https://orcid.org/0000-0002-9439-296X}{\includegraphics[scale=0.06]{orcid.pdf}\hspace{1mm}Giacomo Baggio} \\
	Department of Information Engineering\\
	Università degli Studi di Padova\\
	Padova, 35131, IT \\
	\texttt{baggio[at]dei.unipd.it} \\
	\AND
    \href{https://orcid.org/0000-0002-4785-2118}{\includegraphics[scale=0.06]{orcid.pdf}\hspace{1mm}Francesco Bullo} \\
	Center for Control, Dynamical Systems and Computation\\
	University of California at Santa Barbara,\\
	Santa Barbara, CA, 93106 IT \\
	\texttt{bullo[at]ucsb.edu}
    \And
    \href{https://orcid.org/0000-0001-8926-1888}{\includegraphics[scale=0.06]{orcid.pdf}\hspace{1mm}Sandro Zampieri} \\
	Department of Information Engineering\\
	Università degli Studi di Padova\\
	Padova, 35131, IT \\
	\texttt{zampi[at]dei.unipd.it}\\ 
}
\begin{document}
\maketitle

\begin{abstract}
The Hopfield model provides a mathematically idealized yet insightful framework for understanding the mechanisms of memory storage and retrieval in the human brain.  This model has inspired four decades of extensive research on learning and retrieval dynamics, capacity estimates, and sequential transitions among memories.  Notably, the role and impact of external inputs has been largely underexplored, from their effects on neural dynamics to how they facilitate effective memory retrieval. To bridge this gap, we propose a novel dynamical system framework in which the external input directly influences the neural synapses and shapes the energy landscape of the Hopfield model. This plasticity-based mechanism provides a clear energetic interpretation of the memory retrieval process and proves effective at correctly classifying highly mixed inputs. Furthermore, we integrate this model within the framework of modern Hopfield architectures, using this connection to elucidate how current and past information are combined during the retrieval process.  Finally, we embed both the classic and the new model in an environment disrupted by noise and compare their robustness during memory retrieval.
\end{abstract}

% keywords can be removed
\keywords{Hopfield Networks \and Associative Memory \and Short-Term Plasticity \and Dynamical Systems}

\section{Introduction}
Since the beginning of the 80s, the words ``Associative Memory Network" have  closely echoed with ``Hopfield Network" \citep{hopfield1982neural,hopfield1984neurons}, and a plethora of subsequent works have endeavored to provide a detailed picture of the properties of such networks \citep{amit1987Stat,crisanti1986saturation,treves1988metastable}. Drawing from the toolbox of statistical mechanics, Hopfield networks provided a convincing explanation for the multi-stability of memories as function of the neurons couplings, and therefore a plausible, dynamic retrieval mechanism over an energy landscape. Recently, in a machine-learning driven Renaissance for associative memory networks, the original framework has been generalized to higher order interactions \citep{krotov2016dense} and to multi-layered architectures \citep{krotov2020large, chaudhry2023long}, thus endowing the model with both a significantly improved capacity \citep{demircigil2017model} and a direct bridge to state-of-the-art transformer models and their attention mechanism \citep{ramsauer2021hopfield}. Moreover, the new framework has paved the way for new hypotheses on how neurons and astrocytes could interact \citep{kozachkov2023building}, at the functional level, to support cognitive processes. The effort to bridge formal approaches and neuroscience is of paramount importance for the advancement of both fields. As proposed in \citep{treves1992computational}, attractor dynamics may be a key component of hippocampal functioning, where the signal relayed by cortical areas is sparsified and orthogonalized in the CA3-CA1 regions \citep{yassa2011pattern, rolls2013mechanisms}. In addition, simple attractor models provide a viable tool to study global cortical dynamics in the brain \citep{russo2012cortical, naim2018reducing}, by partitioning the surface in interacting patches of cortex each idealized by Hopfield like networks.

In classic treatments on computational neuroscience \citep{amit1989modeling, dayan2005theoretical, gerstner2014neuronal}, memory retrieval in the Hopfield model is implicitly described as a two-step process. 
First, a noisy or incomplete input is presented as a cue and adopted as an initial condition. 
Then, driven by an energy landscape, the network state flows towards the closest energy minimum
representing the prototypical memory. 
% first critique
The literature however lacks an explanation for how an external input becomes an initial condition in the neural dynamics; it is worth noting that
external inputs and initial conditions play distinct roles in the behavior of a dynamical system.
% second critique
Most importantly, while this classic two-step process is natural within an algorithmic paradigm, it fails to explain how neuronal circuits continuously react and adapt in real time to external inputs.

In light of these limitations, we advocate for a paradigm shift from 
a two-step mechanism, akin to a standard algorithmic approach,
to a input-driven dynamic mechanism, aligned with the principles of online algorithms and continual learning \citep{zenke2017continual, hadsell2020embracing, lesort2020continual}. 
To this extent, we propose a novel version of the Hopfield model that is driven by external inputs. 
A key feature of this model is that the input shapes the energy landscape and affects the resulting gradient descent flow (see Fig.~\ref{fig:EnerClass}).
Furthermore, our model admits a simple representation as
a modern Hopfield network \citep{krotov2020large, ramsauer2021hopfield};
this representation provides a conceptual bridge with the recent
literature on transformer models and machine learning.  Finally, the
addition of noise reveals the advantageous integration of past and
present information by our model, thereby reducing misclassification
errors induced by inconsistent or `glitchy' inputs.

\begin{figure}[t!]
\centering
\includegraphics[width=11.4cm]{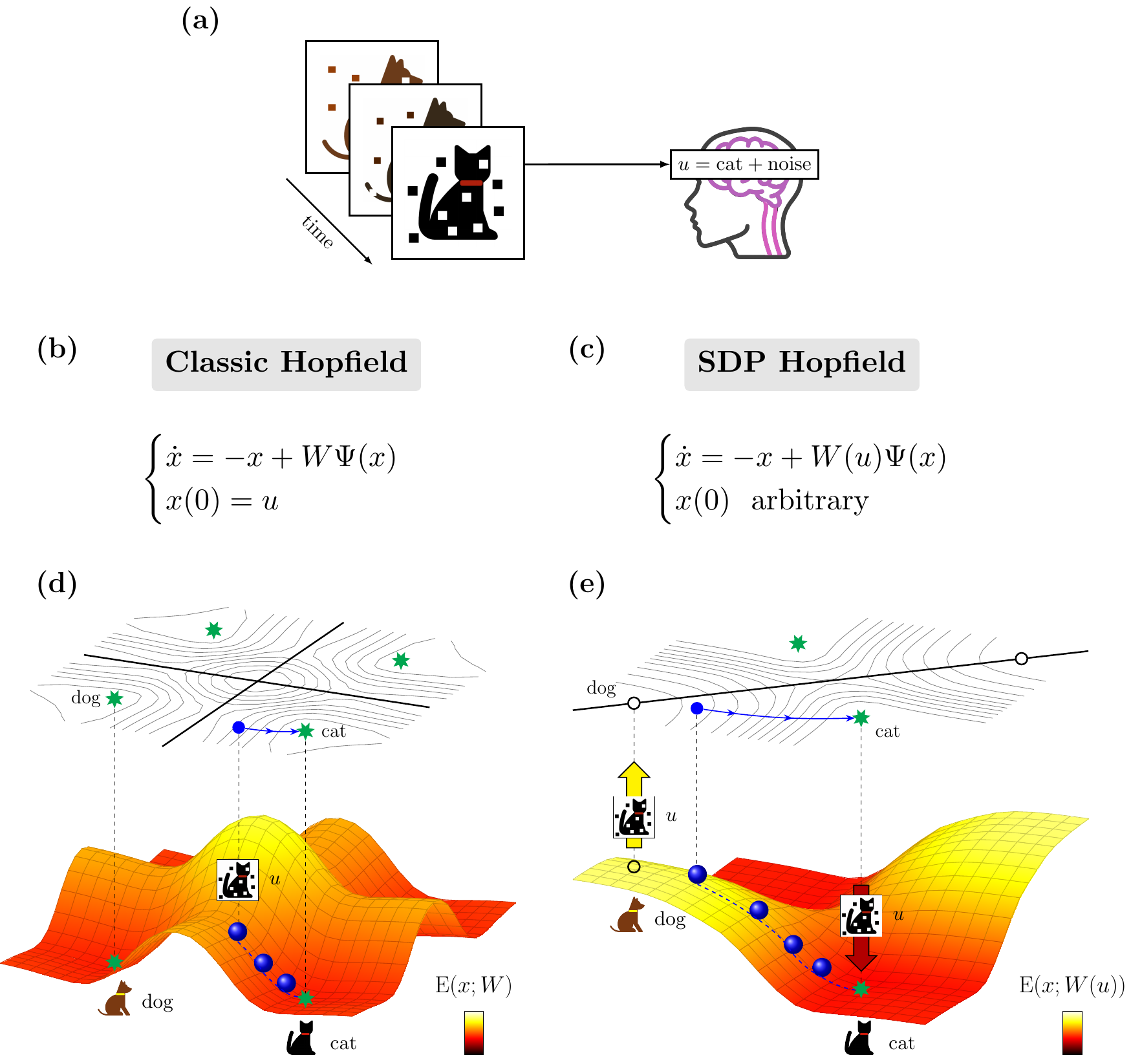}
\caption{Comparison between classic Hopfield and IDP Hopfield models. (a) A slowly morphing sequence of noisy images is presented as a input to the observer, who updates its belief state to retrieve the memory closest to the current image $u$. This adaptation process occurs continuously. 
(b) In the classic model, the network state is set to an initial condition $x(0)$ equal to the current image $u$ and then the Hopfield dynamics performs the memory retrieval task. 
(c) In the proposed input-driven plasticity model, the network initial condition is arbitrary, the image $u$ modifies the synaptic weights $W(u)$, and the Hopfield dynamics with modified synaptic weights performs the memory retrieval task. This dynamics is well posed and naturally tracks the morphing images also when the image is time-varying $u=u(t)$.
(d) In the classic model, the Hopfield dynamics is a gradient descent for the energy $\mathrm{E}(x;W)$:
the blue ball, representing the neural state, rolls from an initial condition towards a stable minimum point (cat memory).
Therefore, the retrieval process is successful when the noisy image $x(0)=u$ (dotted cat) lies in the region of attraction of the correct memory (cat memory). 
(e) In the proposed model, the noisy image $u$ directly modifies the synaptic matrix $W(u)$ and in turn the energy landscape $\mathrm{E}(x;W(u))$, thereby extending the region of attraction of the correct memory. The retrieval process is successful from generic initial conditions when the correct memory is the unique minimum of the landscape. Specifically, in the panel, the noisy image (dotted cat) renders the correct memory a minimum (cat memory) and the incorrect memory (dog memory) no longer an equilibrium of $\mathrm{E}(x;W(u))$. \vspace{1cm}}
\label{fig:EnerClass}
\end{figure}

\section{Results}
\subsection{Primer on Hopfield Networks}

Hopfield networks are a fundamental tool in the study of high level, distributed memories retrieval \citep{hopfield1982neural}. They significantly simplify neural dynamics into the interplay of two components: a dissipative flow, constantly polarizing the network state towards its resting value, and a synaptic flow, which takes into account the weighted sum of the incoming activity from other neurons in the network. Namely,
\begin{equation}
    \begin{cases}
        \dot{x}(t) = \underbrace{-x(t)}_{\text{dissipation}}\enspace+\enspace \underbrace{ W\Psi(x(t))}_{\text{synaptic interconnections}}\\
        x(0) = x_{0}\in\mathbb{R}^{N}
    \end{cases}
\end{equation}
where the prototypical memories $\{\xi^\mu\}_{\mu=1}^P$, $\xi^\mu\in\{-1,+1\}^N$ are assumed to be orthogonal or with entries that are independent and identically distributed. They are stored in the synaptic matrix $W$ through one-shot Hebbian learning as 
\begin{equation}
    W = \frac{1}{N}\sum_{\mu=1}^{P}\xi^{\mu}{\xi^{\mu}}^{\top}
\end{equation}
Under suitable assumptions on the activation function $\Psi$ \citep{krotov2020large}, convergence to any of the stored memories is guaranteed \citep{hopfield1984neurons} by the existence of the energy function (see Fig.~\ref{fig:EnerClass}(d))
\begin{equation}
    \mathrm{E}\big{(}x;W\big{)} = -\frac{1}{2}\Psi(x)^{\top}W\Psi(x)+x^{\top}\Psi(x)-\sum_{i=1}^{N}\int_{0}^{x_{i}}\Psi_{i}(z)\:dz
\end{equation}
given an appropriate initial condition $x_{0}$.

The model, first proposed as a discrete time dynamic system \citep{hopfield1982neural}, has been particularly successful and captivating at explaining pattern reconstruction starting from a partial or corrupted cue \citep{gerstner2014neuronal,dayan2005theoretical}. In its simplicity, the original theory effectively framed memory retrieval in the cascade of reactions that lead a network of elementary computational units to fix in a meaningful collective state, i.e.~the prototypical memory. Furthermore, the geometric constraints placed on the prototypical memories and on the synaptic matrix allowed for a detailed study of the network capacity. In the original paper and subsequent works \citep{hopfield1982neural,Amit1987Inf} it was estimated at around $0.14N$, and only later was refined \citep{petritis1995thermodynamic} at $\frac{N}{6{\log(N)}}$ for exact convergence in probability. These estimates represent the number of patterns that can be recovered in the asymptotic limit of infinite network size without the endogenous noise disrupting their stability. 

The original model has then been expanded to the continuous setting \citep{hopfield1984neurons} with the inclusion of external inputs from the environment, according to the following equation
\begin{equation}\label{CHC}
    \dot{x}(t) = -x(t) + W\Psi(x(t))+u(t) 
\end{equation}
The introduction of an external input conflicts with the previous definition of prototypical memories as stable patterns of activity encoded in $W$. If the input is time-varying, then the energy function is no longer Lyapunov, namely it is not decreasing along the system trajectories,
and hence we may lose convergence to any stable pattern. On the other hand, if the input is constant, it may distort the energy landscape in such a way that the minima are not related to any of the prototypical memories. Indeed, when the input is a heterogeneous mixture of memories, the energy landscape is inevitably corrupted and none of the stored memories is retrievable (see Fig.~\ref{fig:det_comp}(b)). A possible solution is to clamp the input \citep{Battaglia1998} for a brief time interval $\delta t$ as an external driver for the dynamics, then un-clamp it and let the dynamics naturally evolve, namely
\begin{align}
    \dot{x}(t)&=-x(t)+W\Psi(x(t))+u(t)\zeta(t)\label{CHM}\\
    \zeta(t)&\propto\mathbbm{1}_{[0,\delta{t}]}(t)\label{Mod}
\end{align}
where $\mathbbm{1}_{S}(t)$ is the characteristic function associated to any subset $S$ of the real numbers defined by letting $\mathbbm{1}_{S}(t)=1$ if $t\in S$ and $\mathbbm{1}_{S}(t)=0$ otherwise. The external driver shoots the state trajectory in the direction of the new input, acting as bias for the next retrieval. The model~\eqref{CHM} is successful at discriminating the dominant component of very mixed inputs (see Fig.~\ref{fig:det_comp}(c)), but it does so at the cost of introducing network-wide synchrony. In addition, as observable from Fig.~\ref{fig:det_comp}(b,c), each new input instantaneously alters the network activity, canceling the information about any previous activity. Instead, in Fig.~\ref{fig:det_comp}(d) our novel framework displays a remarkable capability of successfully retrieving the correct memory given the continuous external input, and it will be presented in the next section.   

\begin{SCfigure*}[\sidecaptionrelwidth][tbph!]
\centering
\includegraphics[width=.5\linewidth]{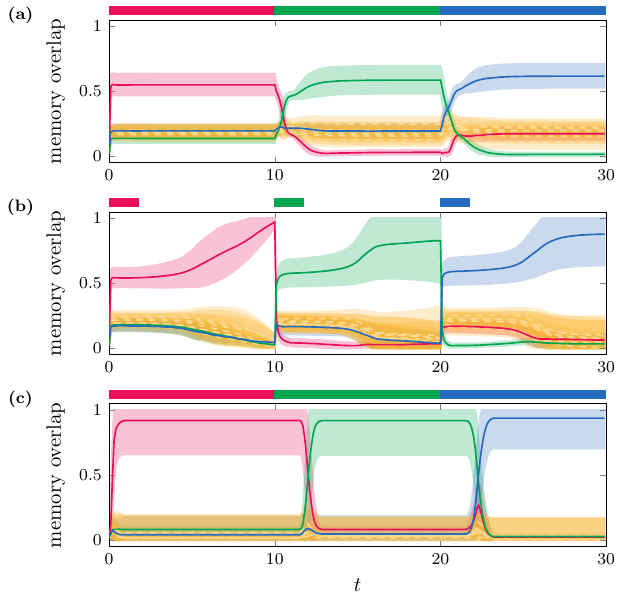}
\caption{Exploration of the response of different associative memory models to a time-varying input. Each retrieval plot displays a sample average of the memory overlap parameter over $50$ retrieval tasks, with the associated standard deviation shown through the shaded, color-matched surrounding area. Panel~(a),~(b), and (c) show the response of three different models to a time-varying input. The simulation time is segmented, with each sub-interval featuring a constant input. The horizontal bars on top of each retrieval plot display the duration of the external input and the color associated to its dominant saliency weight. At each external input switch, the previous dominant saliency weight shrinks below the stability threshold $\alpha_{\text{stability}}$. 
(a) Classic Hopfield model~\eqref{CHC}. The network dynamics converge to a mixed state, precluding exact retrieval of any individual memory. 
(b) Input-modulated Hopfield model~\eqref{CHM}. The modulator~\eqref{Mod} shuts off the input during the non-shaded sub-intervals. The network dynamics then freely recall the prototypical memory associated with the dominant component in the input.
(c) IDP Hopfield model~\eqref{MHC}. The network naturally recalls the prototypical memory associated with the dominant component in the input. Remarkably, the model exhibits memory retention even after input switching, showcasing an intriguing integrative feature for past and new information.}
\label{fig:det_comp}
\end{SCfigure*}

\subsection{The Input-Driven Plasticity (IDP) Hopfield Model}
The aim of this work is that of studying a plausible mechanism that, given a mixed input, first maintains fixation for short transients on what was previously retrieved, that is past information, and then gradually merges it with the current input, that is present information. This gradual integration should, at a certain point, favor the present information and retrieve the new correct memory. We thereby introduce a novel, externally modulated Hopfield model, named input-driven plasticity (IDP) Hopfield model, that tries to capture this exact phenomenology. 
The new IDP Hopfield model is defined as
\begin{equation}\label{MHC}
     \dot{x}(t)=-x(t)+W(u(t))\Psi(x(t))
\end{equation}
where for a generic input $u(t)$ the novel \emph{input modulated synaptic matrix} is
\begin{align}
     W(u(t)) &= \frac{1}{N}\sum_{\mu=1}^{P}\xi^{\mu}{\xi^{\mu}}^{\top}u(t){\xi^{\mu}}^{\top}\nonumber\\
     &=\frac{1}{N}\sum_{\mu=1}^{P}\alpha_{\mu}(t)\xi^{\mu}{\xi^{\mu}}^{\top}
\end{align}
where we call the $\alpha_{\mu}(t):={\xi^{\mu}}^{\top}u(t)$ saliency
weights in accordance with previous literature
\citep{blumenfeld2006dynamics,tang2010memory}.
These previous studies have explored the role of saliency weights
and their effect on the dynamics~\eqref{MHC} within the context of
the morphing problem, focusing on how these weights drive
transitions between memories as they evolve. However, a
comprehensive analysis of how the magnitude
of saliency weights affects the existence and stability of memories as equilibrium points remains missing. To address this gap, we begin by characterizing the input, followed by a detailed examination of the specific conditions on saliency weights necessary to ensure the existence and stability of memories as equilibria for \eqref{MHC}.
We say that an input is homogeneous when it is significantly aligned
with one of the prototypical memories and almost orthogonal to the others,
i.e.~when there exists $\alpha_{\nu}\gg{0}$ and $\alpha_{\mu}\approx{0}$
for all $\mu\neq{\nu}$. Conversely, we say that the input is
well-mixed when it has similar saliency weights associated to all the
prototypical memories, i.e.~$\alpha_{\mu}\approx{\alpha_{\nu}}$ for all
$\mu,\nu$. Furthermore, under a timescale separation hypothesis for neural
dynamics and biologically trackable inputs, i.e. under the hypothesis that
neural dynamics are much faster than input dynamics, the latter can be
considered constant throughout each retrieval time interval.
Unlike Hopfield’s foundational works
\citep{hopfield1982neural, hopfield1984neurons} based upon a static
synaptic matrix, we build on insights from \citep{blumenfeld2006dynamics,
mongillo2008synaptic} to introduce elementary dynamics into the
synaptic structure. This dynamic approach yields advantageous properties
for system behavior, particularly by demonstrating how externally
modulated synaptic changes can promote the retrieval of certain patterns
while inhibiting others.  To rigorously quantify how the input determines
specific retrieval properties at the network level, we analytically study
the existence and stability of memories, specifically for the case of a
diagonal monotonic homogeneous activation function
$\Psi(x)=(\psi(x_{1}),\dots,\psi(x_{N}))$:
\begin{enumerate}
    \item given a prototypical memory $\xi^{\mu}$, the associated retrievable memory is a vector of the form $x_{\mu}=\gamma_{\mu}\xi^{\mu}$ for $\gamma_{\mu}\neq{0}$ and is an equilibrium of \eqref{MHC}
    \item a retrievable memory exists under the current input if and only if the associated saliency weight is larger than an existence threshold:
    \begin{equation}
        \alpha_{\mu}> \subscr{\alpha}{existence}=1
    \end{equation}
    and in this case $\gamma_{\mu}$ satisfies
    \begin{equation}
        \frac{\gamma_{\mu}}{\alpha_{\mu}}=\psi(\gamma_{\mu})
    \end{equation}
    \item if multiple retrievable memories exist under the current input, then there exists a stability threshold $\subscr{\alpha}{stability}>1$ such that each memory $x_{\mu}$ is stable if and only if $\alpha_{\mu}>\subscr{\alpha}{stability}$. 
\end{enumerate}

\begin{figure}[tbph!]
    \centering
    \includegraphics[width=11.4cm]{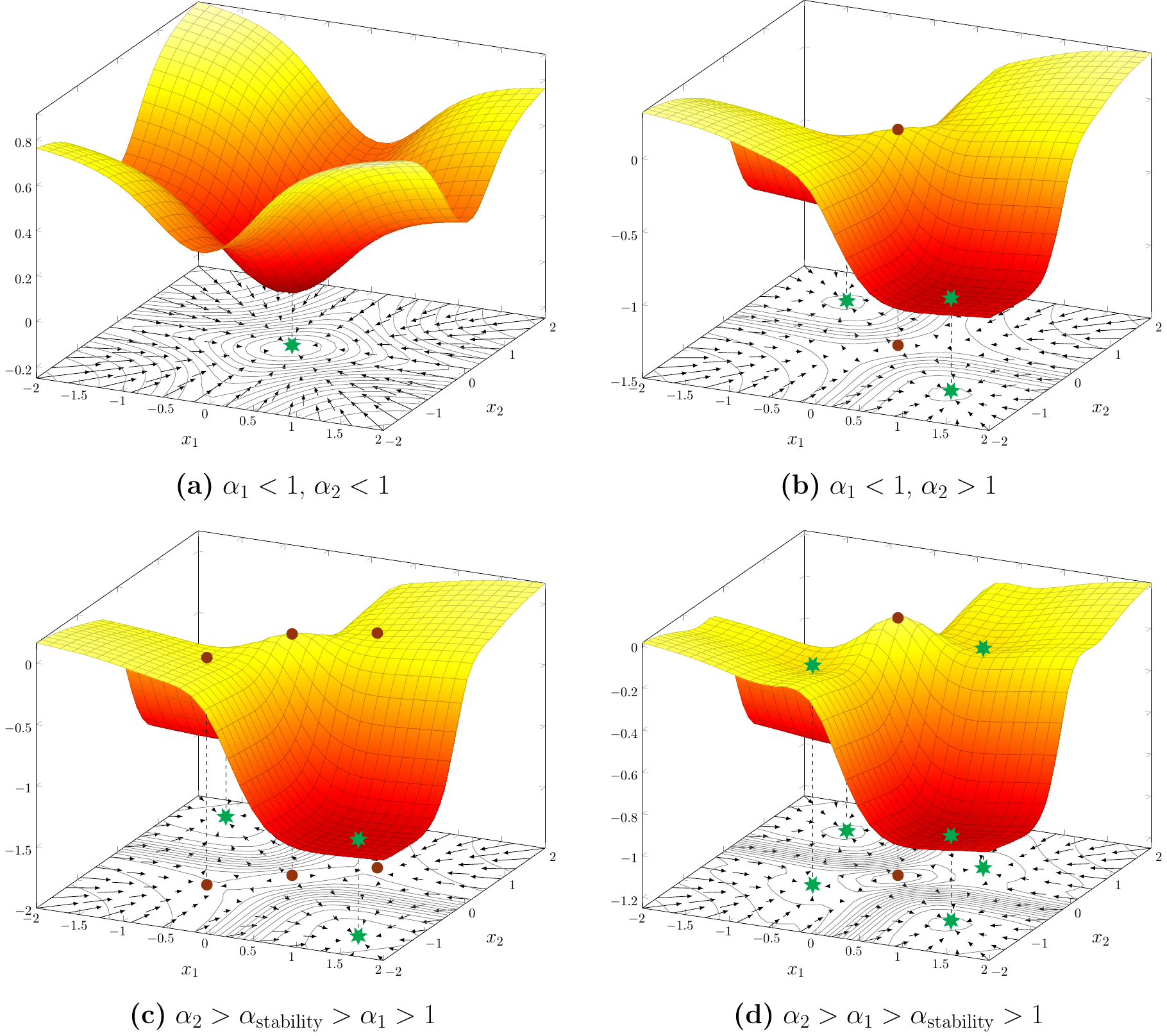}
    \caption{Energy landscapes for IDP Hopfield model for varying saliency weights. Stable and unstable equilibria are depicted as green stars and brown dots, respectively. Recall the existence threshold is $\subscr{\alpha}{existence}=1$ and, when 
    multiple memories exist in the input, the stability threshold satisfies $\subscr{\alpha}{stability}>1$. 
    (a) ``no memories'' $\alpha_1<1$, $\alpha_2<1$: when no memory is sufficiently strong in the input, the only global minimum is at the origin and it is globally attractive for the dynamics. This situation corresponds to a confusion state for the network, in which the input is not strong enough to evoke any retrieval. 
    (b) ``one memory'' $\alpha_1<1$, $\alpha_2>1$: when the saliency weight for precisely one memory is above $\subscr{\alpha}{existence}$, two symmetric equilibria appear corresponding to the memory and they are attractive from almost all initial conditions. In this case, the origin is a saddle point.
    (c) ``one stable and one unstable memory'' $\alpha_2>\subscr{\alpha}{stability}>\alpha_1>1$: two symmetric equilibria appear corresponding to the stable memory and they are attractive from almost all initial conditions. The other two symmetric equilibria are saddle points and the origin becomes an unstable maximum. 
    (d) ``two stable memories'' $\alpha_{2}>\alpha_{1}>\subscr{\alpha}{stability}$: four symmetric stable equilibria appear, corresponding to the two memories of the model. The memory associated to the dominant saliency weight carves deeper valley in the energy landscape than the other memory. When $\alpha_{1}\approx\subscr{\alpha}{stability}$, the shallowness of the valley associated to the first memory facilitates  outward jumps due to stochastic fluctuations.\vspace{1cm}}
    \label{fig:EnergyCases}
\end{figure}

Notice that the classic Hopfield model is a  specific sub-case of the IDP model, where the input impinging on the synaptic matrix is characterized by $\alpha_{\mu}\equiv{k}>1$, for all $\mu$. 
As clearly observable from Fig.~\ref{fig:EnergyCases} the external input induces a smooth reshaping of the energy landscape. In particular, the modified energy function is
\begin{equation}\label{eq: energy}
    \mathrm{E}\big{(}x;W(u)\big{)} = -\frac{1}{2}\Psi(x)^{\top}W(u)\Psi(x)+x^{\top}\Psi(x)-\sum_{i=1}^{N}\!\int_{0}^{x_{i}}\!\!\!\psi(z)\:dz
\end{equation}
and for every $\nu\in\{1,\dots,P\}$ such that $\alpha_{\nu}<\alpha_{\mu}$ we have that
\begin{equation}
    \mathrm{E}(x_{\mu},u)<\mathrm{E}(x_{\nu},u)
\end{equation}
Exploiting the timescale separation hypothesis for neural dynamics and input dynamics, we also have that 
\begin{align}
    \frac{d}{dt}\mathrm{E}\big{(}x(t);W(u(t))\big{)}&=\frac{\partial{\mathrm{E}}}{\partial{x}}\big{(}x(t);W(u(t))\big{)}\dot{x}(t)\nonumber\\
    &+\frac{\partial{\mathrm{E}}}{\partial{u}}\big{(}x(t);W(u(t))\big{)}\underbrace{\dot{u}(t)}_{\approx{0}}\nonumber\\
    &=\frac{\partial{\mathrm{E}}}{\partial{x}}\big{(}x(t);W(u)\big{)}\dot{x}(t)\leq{0}
\end{align}
The new, IDP Hopfield model naturally endows us with a direct interpretation of when the input is clearly understandable, and when instead is too vague to evoke any kind of retrieval. As previously presented, memories exist (are retrievable) only if their respective saliency weights in the input are at least unitary. Thus, for any input that is well-mixed and such that $\alpha_{\mu}<1$ for all ${\mu}$ (Fig.~\ref{fig:EnergyCases}(a)) the dynamics will converge to the origin, which we call confusion state. This state exists only in the new IDP model, as the classic model always has all the retrievable memories irrespective of the input. The possibility of falling into a confusion state on the basis of what is experienced seems quite plausible, as the clarity of what we sense in our everyday experience has a clear bearing on what we recall.

The reader is referred to the SI for a detailed proof of the presented results and for details on the numerical characterization of the inputs used for the simulations.

\subsection{A Modern Interpretation}
As we have mentioned in the introduction, the IDP model naturally lends itself to a description through the modern formalism outlined in \citep{krotov2020large, ramsauer2021hopfield}. By means of this formalism, a recurrent neural network such as the Hopfield model can be effectively deconstructed into two interacting layers void of intralayer connections.

The following tripartite architecture characterizes a more general model that captures the interaction of the activity in the memory layer $y\in{\mathbb{R}^{P}}$ and in the saliency layer $\alpha\in{\mathbb{R}^{P}}$. The new combined information is then exploited to drive the retrieval in the feature layer $x\in\mathbb{R}^{N}$. In summary, the modern Hopfield reformulation of the IDP model is 
\begin{align}
    \tau_{x}\dot{x}(t) &= -x(t) + M_{x}\Psi_{x}(y(t) \odot\alpha(t))\label{featl}\\
    \tau_{y}\dot{y}(t) &= -y(t) + M_{y}\Psi_{y}(x(t))\label{memol}\\
    \tau_{\alpha}\dot{\alpha}(t) &= -\alpha(t) + M_{\alpha}\Psi_{\alpha}(u(t))\label{alphal}
\end{align}
where the symbol $\odot$ is the Hadamard entrywise product, namely $(y(t)\odot\alpha(t))_i=y_{i}(t)\alpha_{i}(t)$, and $\Psi_{\alpha}$ is an activation function implementing either a linear or non-linear processing of the input.
Notice that when $M_{x} = M$ and $M_{y}=M_{\alpha}=M^{\top}$ with $M=N^{-\frac{1}{2}}[\xi^{1}\cdots\xi^{P}]\in{N}^{-1/2}\{-1,1\}^{N\times{P}}$, and $\Psi_{x},\Psi_{\alpha}$ are identity functions, the equations~\eqref{featl},\eqref{memol}, \eqref{alphal} reduce to the IDP Hopfield model~\eqref{MHC} in the limits $\tau_{y}\rightarrow{0}$ and $\tau_{\alpha}\rightarrow{0}$.
Unraveling the activity of the IDP Hopfield network into the distinct components allows for a better qualitative understanding of the layers' contribution (see Fig.~\ref{fig:block} for a block representation). The memory layer serves the function of pooling layer, projecting the activity of the feature layer into a similarity space. The pooled activity is then modulated by the input decomposition. It is clear then that the combination of pooling and modulation implements a natural trade-off between past internal activity and externally incoming information.     
\begin{figure}[tbph!]
    \centering
    \includegraphics[width=1\linewidth]{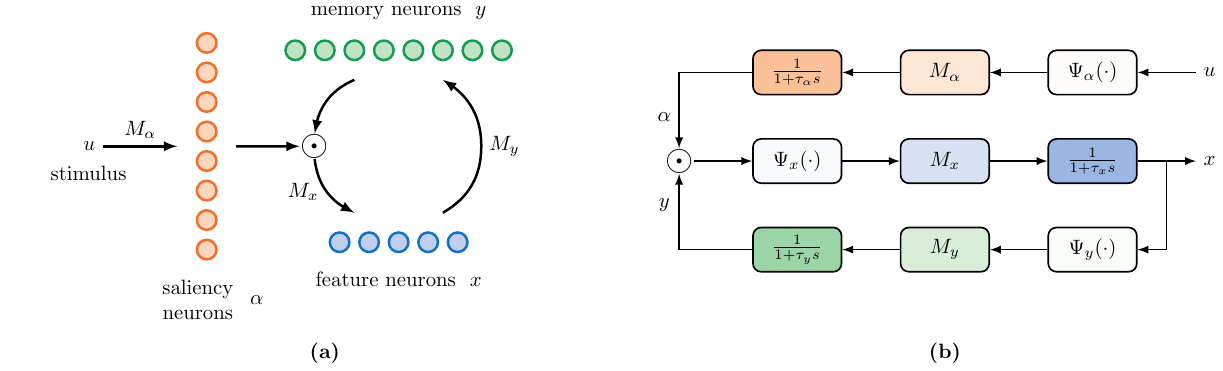}
    \caption{Illustration of the 
    modern Hopfield reformulation~\eqref{featl},\eqref{memol},\eqref{alphal} for the IDP Hopfield model with input filter. 
    The symbol~$\odot$ denotes a Hadamard entrywise product.
    (a) Neural network representation with interconnected layers and synaptic weights $M_\alpha$, $M_x$, and $M_y$. 
    (b) Block diagram representation, where each block~$\frac{1}{1+\tau s}$, where $\tau=\tau_\alpha,\tau_x,\tau_y$, denotes a low-pass filter with cutoff frequency~$1/\tau$, each block $M_\alpha$, $M_x$, and $M_y$ denotes a matrix-vector multiplication, and each block $\Psi_{\alpha}(\cdot)$, $\Psi_{x}(\cdot)$, and $\Psi_{y}(\cdot)$ denotes an activation function.}
    \label{fig:block}
\end{figure}
In the model discussed so far we have the input processing $M_{\alpha}\Psi_{\alpha}(u(t))=M^{\top}u(t)$, because it allows a direct logical and visual interpretation of the relationship between the input decomposition and the network response. However, in a realistic neuronal system the relationship may not be so simple and interpretable. Indeed, the most general case would be that where $M_{\alpha}\in{\mathbb{R}^{P\times{K}}}$ and $u(t)\in{\mathbb{R}^{K}}$ is a generic input. This is the case when the input originates from a different cortical region possibly having a different output dimensionality, and therefore the relation with the prototypical memories is more subtle. A discussion of how the $\Psi_{\alpha}$ output channels are related to the memories stored in the recurrent network is beyond the scope of this article, but remains an important problem open to original solutions.

\subsection{IDP Induces Resiliency to Noise} Both the classic~\eqref{CHM} and the IDP Hopfield~\eqref{MHC} models are fully capable of retrieving the correct memory when subjected to the same well-mixed input (see Fig.~\ref{fig:det_comp}(c,d)). This may mislead the reader to conclude that the two models are in fact equivalent. However, both are still ideal models, where neural units are perfectly insulated. Thus, they fail to capture the essence of real neural phenomena in settings with ubiquitous background noise. 
Thus, we consider the associated stochastic dynamics
\begin{align}\label{SDE}
    \dot{x}(t) &= \mathcal{F}(x(t),u)+ \eta(t)\\
    \mathbb{E}[\eta(t)]&=0\\
    \mathbb{E}[\eta(t)\eta(\tau)^{\top}]&=\sigma^{2}I_{N}\delta(t-\tau)
\end{align}
where the drift term $\mathcal{F}$ is in place of a generic field chosen between the classic and IDP models, and the shot-noise increments are taken to be state-independent for simplicity. 
It is well known \citep{yan2013nonequilibrium,brinkman2022metastable,moss1989noise} that the probability density function (P.D.F.) $\mathbb{P}(t,x)$ associated to~\eqref{SDE} evolves according to the parabolic Fokker-Planck P.D.E. (F.P.E.)
\begin{equation}\label{FPE}
        \frac{\partial{\mathbb{P}(t,x)}}{\partial{t}}=-\sum_{i=1}^{N}\frac{\partial}{\partial{x_{i}}}[\mathcal{F}_{i}(x)\mathbb{P}(t,x)]\!+\!\frac{\sigma^{2}}{2}\!\sum_{i=1}^{N}\frac{\partial^{2}}{\partial{x_{i}^{2}}}\mathbb{P}(t,x)
\end{equation}
Given the close dependence of the drift term $\mathcal{F}(x)=D\Psi^{-1}(x)\nabla_{x}\mathrm{E}(x,W)$ on the gradient of the energy~\eqref{eq: energy}, the mass of the stationary probability density is expected to concentrate around the deepest minima in the energy landscape. Numerical results confirm this intuition. Specifically, in Fig.~\ref{fig:PDFs} we employ a finite difference scheme in a bounded domain to compute the stationary solution of~\eqref{FPE} for both the classic and IDP Hopfield models. As expected, the resulting densities exhibit peaks close to the minima of the corresponding energy functions. In addition, the height of each peak is proportional to the depth of the corresponding minimum in the energy landscape. In simple terms, these results indicate that the wandering of the stochastic dynamics will most likely be confined to the valley associated with the deepest energy minima.
 
To further strengthen the link between energy and stationary densities, we observe that the stochastic model~\eqref{SDE} is similar to the known diffusion state machine \citep{wong1991stochastic}, which admits the clean energy-dependent stationary P.D.F.
\begin{equation}\label{StatDS}
    \mathbb{P}_{\infty}(x)=Z^{-1}e^{-\frac{\mathrm{E}(x; W)}{\sigma^{2}}}
\end{equation}
In the SI we numerically show that~\eqref{StatDS} and the stationary density of~\eqref{SDE} exhibit a similar shape.
\begin{figure}[tbph!]
\centering
\includegraphics[width=1\linewidth]{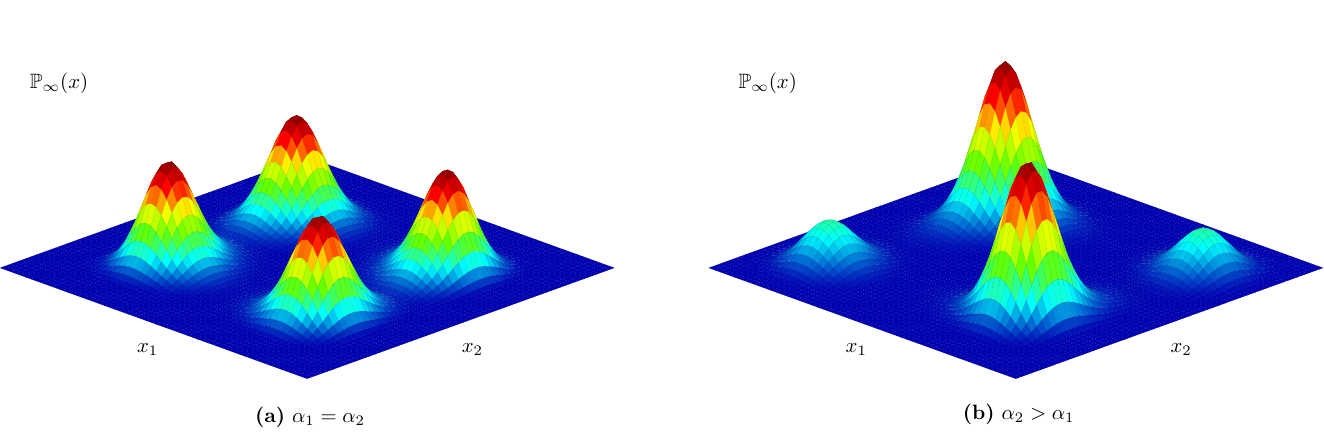}
\caption{Stationary probability densities $\mathbb{P}_{\infty}(x)$ (plotted as surface for $N=2$) that are equilibrium solutions of the F.P.E.~\eqref{FPE}. 
(a): stationary probability density when two memories have equal saliency weights ($\alpha_1=\alpha_2=3/2$). In this case, the IDP Hopfield model is equivalent to the classic Hopfield model. 
(b): stationary probability density when a single memory is dominant ($\alpha_2=9/4>\alpha_1=3/2$). In this case, the memory associated to the dominant saliency weight absorbs most of the probability density and thus enforces a stochastic confinement of the  trajectories.}
\label{fig:PDFs}
\end{figure}
The preliminary study of the stationary probability densities associated to both the classic and IDP Hopfield models provides only a partial account of their differences during the retrieval task. Since the retrieval of memories is a dynamical process, we are also interested in understanding how the noise affects the transients between memories as the input changes. A first observation is that as the amplitude of the perturbations grows, the classic and the IDP Hopfield model increasingly diverge in terms of their response to the input. Indeed, the classic Hopfield model is characterized by a transient that is perfectly aligned with the provided input, but as soon as this ceases to exist, the noise disrupts the network activity and hinders it from reaching any prototypical memory. In contrast to such retrieval breaking, the IDP model still displays a solid capability of correctly selecting the memory associated to the dominant $\alpha$-weight and confining the stochastic wandering to its basin of attraction (see Fig.~S5 in the SI for a comparison of the classic and the new model as they react to mixed inputs and noise). 
  Thus, the stochastic IDP model embodies a mechanism that actively
  prioritizes the most relevant features in its environment while
  suppressing the processing of background elements, mirroring the
  psychological phenomenon of selective attention
  \citep{driver2001selective}.

In addition, the introduction of noise reveals another intriguing property of the IDP model: fast switching in highly uncertain contexts. As can be observed from Fig.~\ref{fig:det_comp}(d), the times elapsing between input onset and correct retrieval in the case of zero noise and instability of the previous memory are quite long. Despite the successful switching, the prolonged dwelling on the previous memory seems somewhat undesirable for systems that have evolved to efficiently and readily adapt to their environment. Moreover, as observable from Fig.~\ref{fig:noise}(b), in cases with zero noise and persistent stability of the previous memory, the system remains entrapped in a previous recall, despite the shift in the dominant input component. This behavior seems equally undesirable, as it would hinder the system from correctly responding to contexts where the incoming information is highly mixed and therefore sufficiently vague.  The IDP model embedded in a natural noisy environment efficiently solves both of these problems, as clear from Fig.~\ref{fig:noise}(c). First, it induces correct switching between different memories in accordance with the current dominant component in the input. Additionally, it does so with very swift transients, presenting a system that is suitably tuned for fast tracking of the external input.

Nonetheless, notice how the IDP model displays, after each new input onset, a relatively brief time period (approximately $t=1$ of simulation time) in which the network activity still remains fixated on the previous recall, and only later adapts to the incoming present information. Short fixation may prove useful to correct glitchy inputs decomposition, during which the system is briefly tricked into decoding a change in the focus of attention (see Fig.~S6 in the SI).

\begin{SCfigure*}[\sidecaptionrelwidth][tbph!]
\centering
\includegraphics[width=.5\linewidth]{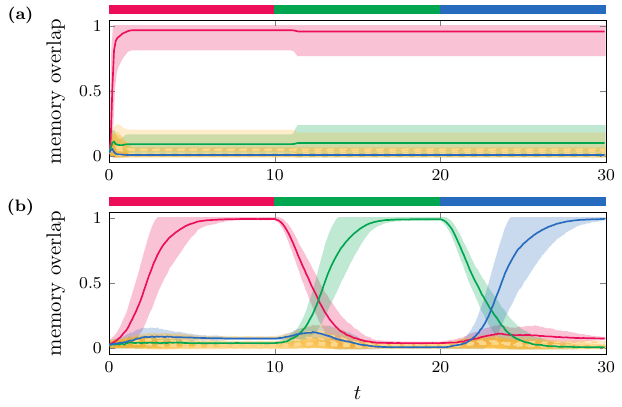}
\caption{Noise facilitates rapid transitions between different prototypical memories for highly mixed inputs in the IDP Hopfield model. Differently from Fig.~\ref{fig:det_comp}, the prototypical memory associated to the red weight remains stable after each input switching. (a) IDP model without noise. The model correctly retrieves the first prototypical memory, but is unable to track the dominant component of new inputs due to persistent stability of the red component. (b) IDP model with noise. The model correctly retrieves the first prototypical memory. At input switch, the memory remains stable, but the associated energy well significantly flattens (see Fig.~\ref{fig:EnergyCases}(d) as reference case). The noise pushes the activity out of the flat wells and directs it towards the deepest one, which is associated to the new dominant saliency weight. The process is then repeated at each input switch.}
\label{fig:noise}
\end{SCfigure*}

\section{Discussion}

\subsection*{Achieving Robust Behavior in Multistable Systems}
Multistability is a distinctive feature of the classical Hopfield model, required for its associative memory functionality. Yet, multistability is typically avoided in controlled engineering systems because it potentially leads to fragile and unpredictable behavior.  While complex systems, such as the electric power grid, may in fact exhibit multiple stability regions, only one of them corresponds to proper grid functioning. It is the responsibility of the grid control infrastructure to steer the system away from undesired stability regions.  The proposed IDP Hopfield model fulfills this same responsibility in the context of memory retrieval: when the
input is not ambiguous, the neural state is reliably driven to the
correct stability region, achieving a robust and predictable memory system.

\subsection*{Energy Shaping in the IDP Hopfield Model} 
The IDP Hopfield model presents a simple yet effective explanation of how a direct input-driven modulation of the synapses can enrich the dynamic range of recurrent neural networks. The input driven adjustments of the synaptic couplings between neurons enforce a clear memory hierarchy, with single memories existing only if sufficiently stimulated. Furthermore, the input decomposition changes the stability properties of single memory patterns. Through the existence of the threshold $\subscr{\alpha}{stability}$, a memory that was stable at a certain time instant can suddenly become a saddle point, and thus allow the network dynamics to roll towards another memory, as proposed by Karuvally et al.~in \citep{karuvally2023general}. During the retrieval process, the saliency attributed to individual memories via input decomposition reshapes the energy landscape of the model. This process deepens the wells associated with the most dominant input components while flattening the others. Consequently, the presence of exogenous noise is enough to drive the dynamics towards the deepest basin of attraction and confine the network activity within it. It is worth mentioning that a model similar to the IDP Hopfield has been recently numerically studied in ~\citep{Herron2023} with the aim of implementing sequential memory retrieval. In this context, the dynamics and distribution of the saliency weights reflect some previous association among prototypical memories, and confine the network activity to limit cycles.

\subsection*{Future Work and Implications}
The present study lays the foundation for future
research aimed at fully analyzing the biologically plausible firing rate
version of the IDP model, a preliminary version of which is outlined in
the SI. This study also opens pathways for the empirical validation of
input-dependent associative memory models and a comprehensive
characterization of learning dynamics, encompassing both long-term memory
formation and short-term modulation of saliency weights. Specifically,
future works will need to consider a complete theoretical investigation of
the IDP firing rate model in a low neural activity regime compatible with
experimental evidence \citep{gastaldi2021shared}. A thorough understanding
of the dynamics underlying memory retrieval in the firing rate model may in
fact guide experimental researchers in their inquiries, and successful
validation of the model may provide a solid ground for the understanding of
high level cognitive processes. Most importantly, our formalization of the IDP Hopfield model paves the way for a thorough investigation into its underlying learning dynamics.

In addition, jointly solving the problem of long-term learning and short-term modulations may provide valuable answers on the multiplicity of timescales observed throughout biological learning processes. Notably, there is already ample evidence \citep{benna2016computational} that heterogeneity in functional features and timescales may be the biological solution to the problems of catastrophic forgetting and adaptive learning. The advancement of such comprehension holds significant implications, extending its impact beyond neuroscience and onto the domain of machine learning (ML). Within the ML field, the mentioned problems prevent virtual and embedded agents \citep{hadsell2020embracing} from deployment in contexts that require both cognitive flexibility and preservation of past knowledge. The successful combination of these two elements epitomizes the ultimate aspiration of continual learning \citep{lesort2020continual,kudithipudi2022biological}, an emergent ML paradigm that seeks to meaningfully integrate present and past data and yields the potential for great technological advancement.

\subsection*{Data, Materials, and Software Availability}
There are no data underlying this work. The code used in this manuscript is publicly available at \url{https://github.com/sim1bet/Input-Driven-Hopfield}.

\subsection*{Acknowledgments}
This work was in part supported by AFOSR project FA9550-21-1-0203 and by the grant Next Generation EU C96E22000350007.

\bibliographystyle{unsrtnat}
\bibliography{references_Main}

\newpage
\section*{\Huge Supporting Information}

\section*{\Large Notation}
%In this section we define some useful notation.

We let $\real^{n\times m}$ denote the set of $n\times m$ matrices with real entries. The symbol $\vectorones[n]$ indicates an $n$-dimensional vectors of ones, $\vectorzeros[n]$ an $n$-dimensional vector of zeros, and $I_n$ the $n\times n$ identity matrix (we will drop the subscript when the dimension is clear from the context).  Given a matrix $A\in\real^{n\times m}$, $A^\top$ is the transpose of $A$ and $\|A\|$ the 2-norm of $A$. For two generic vectors $x,y\in\real^{n}$, we denote with $\langle{x,y}\rangle:=x^\top y$ their inner product. For a symmetric matrix $A=A^\top$, we let $\subscr{\lambda}{max}(A)$ denote the largest eigenvalue of $A$ and we write $A\succ 0$ ($A\succeq 0$) if $A$ is positive definite (positive semidefinite, respectively). We denote with $\mathrm{diag}(d_1,\dots,d_n)$ the diagonal matrix with diagonal entries $d_1,\dots, d_n$. We denote with $C^{k}(\real^{N};\real^{M})$ the set of continuously k-differentiable functions with domain $\real^{N}$ and co-domain $\real^{M}$. If $f(x)\in C^{k}(\real; \real)$, $f^{(k)}(x)$ denotes the $k$-th derivative of $f$. Let $f\in{C}^{1}(\real^{N};\real)$, and we denote with $\nabla_{x}$ the gradient operator such that $\nabla_{x}f(x)=(\partial_{x_{1}}f(x),\dots,\partial_{x_{N}}f(x))$, where $\partial_{x_{i}}$ is the partial derivative w.r.t. the $x_{i}$ variable. We let $\mathbb{P}[\mathcal{A}]$ be probability of the event $\mathcal{A}$, and for a generic random variable $\mathcal{X}$ we denote its expected value as $\mathbb{E}[\mathcal{X}]$. Finally, we denote with $\delta(x)$ the Dirac delta centered at $x=0$.

\section*{Materials and Methods}

\subsection*{Characterization of the input}
For all our simulations, in order to have experimental control over the
interpretation of the resulting trajectories, we define the inputs as:
\begin{equation}
    u = \sum_{\mu=1}^{P}\alpha_{\mu}\xi^{\mu}
\end{equation}
where the saliency weights $\alpha_{\mu}$, $\mu=1,\dots,P$ were randomly
drawn from a uniform distribution. Specifically, fix $\nu\in\{1,\dots,P\}$
to be the dominant saliency weight. Then we have that
$\alpha_{\nu}\sim\mathcal{U}([2,3.5])$ and
$\alpha_{\mu}\sim\mathcal{U}([0.8,1.5])$ for all $\mu\neq\nu$,
$\mu=1,\dots,P$.  Furthermore, to prevent any memory $\xi^{\mu}$ from being
an equilibrium point for the system, it suffices to set $\alpha_{\mu}<0.2$
given our choice of the slope for the activation function.

\subsection*{Simulation of the Trajectories}
Each simulation of the trajectories starts from an initial condition
sampled as $x_{0}\sim\mathcal{N}(0,\mathrm{I}_{N})$. The evolution of the
trajectory is then computed using the Euler-Maruyama method
\citep{Gelbrich1995} as
\begin{equation}
    x(t+\delta t) = x(t) + \mathcal{F}(x(t))\delta t + \sigma\bar \eta\sqrt{\delta t} 
\end{equation}
where $\delta t = 0.01$ is the step-size for the integration, $\sigma=8$ is
the standard deviation of the noise, and $\bar
\eta\sim\mathcal{N}(0,\mathrm{I}_{N})$. For the deterministic simulation,
we simply take $\sigma=0$ and recover the standard forward Euler method.

For the sample average of the trajectories in the main text, the
deterministic simulations were averaged over 50 different initial
conditions. Instead, for the stochastic simulations the average was taken
over 50 different initial conditions and 50 different noise realizations.

\section*{\Large The Model and Existence of Equilibria}
\subsection*{The Model}
Define a set of orthogonal prototypical memory vectors (which can be relaxed to i.i.d.\ memories in the limit of a large number of neurons)
\begin{align}
    &\Sigma :=\{\xi^{1},\dots,\xi^{p}\}, \ \ \xi^{\mu} \in \{-1,1\}^{N} \label{eq:prot-mem1}\\
    & {\xi^{\mu}}^{\top}\xi^{\nu} = \begin{cases}N & \text{if } \mu=\nu \\ 0 & \text{if } \mu\neq \nu\end{cases} \label{eq:prot-mem2}
\end{align}
where $N\in{\mathbb{N}}$ is the number of units of the recurrent neural network. Consider the following dynamical system
\begin{equation}
    \begin{cases}\label{HopU}
        \dot{x}(t) = -x(t) + W(u)\Psi(x(t))\\
        x(0)=x_{0}\in{\real}^{N}
    \end{cases}
\end{equation}
where we define the input modulated synaptic matrix
\begin{equation}
    W(u) = \frac{1}{N}\sum_{\mu=1}^{P}\xi^{\mu}{\xi^{\mu}}^{\top}u(t){\xi^{\mu}}^{\top}
\end{equation}
for a generic constant input $u\in\real^{N}$, where the inner products result in the following saliency weights
\begin{equation}
    \alpha_{\mu}={\xi^{\mu}}^{\top}u(t)\qquad\forall{\mu=1,\dots,P}
\end{equation}

We assume that the activation function satisfies the following conditions:
        \begin{align}\label{eq:diag}
            &\Psi(x) = (\psi(x_{1}),\dots,\psi(x_{N})), \ \ \psi\in{C}^{2}(\real;\real)\\ \label{eq: odd}
            &\psi(z) =-\psi(z)\quad{z\in\real}\\ \label{eq: der}
            &0<\psi'(z)\leq 1,\ \  {z\in{\real}}, \ \ \psi'(0)=1\\ \label{eq: asy}
            &\!\!\lim_{z\rightarrow{\pm{\infty}}}\psi(z) = \pm{1} \\ \label{eq: 2der}
            &\psi''(z)
        \begin{cases}
            <0\qquad{z>0}\\
            =0\qquad{z=0}\\
            >0\qquad{z<0}
        \end{cases}
        \end{align}
    
A function satisfying the above conditions is, for instance, $\psi(z)=\tanh(z)$. Condition~\eqref{eq: der} is assumed for simplicity, and the following analysis can be easily adapted to the case where $0<\psi'(z)\leq \beta$,  ${z\in{\real}}$, $\psi'(0)=\beta$, and $\beta >0$.  

\subsection*{Existence of Equilibria}
Given a prototypical memory  $\xi^{\rho}$ as in~\eqref{eq:prot-mem1}-\eqref{eq:prot-mem2}, a retrievable memory associated to $\xi^\rho$ is a vector of the form $x_{\rho}=\gamma_{\rho}\xi^{\rho}$ for $\gamma_{\rho}\neq{0}$ which is an equilibrium for~\eqref{HopU}.

\begin{theorem}[Existence of retrievable memories]\label{thm:existence} Let $\xi^{\rho}$ be a prototypical memory. Then $x_{\rho}=\gamma_{\rho}\xi^{\rho}$ is an equilibrium for~\eqref{HopU} for some 
$\gamma_{\rho}\in{\real}\setminus\{0\}$ if and only if $\alpha_{\rho}>1$. In this case, $\gamma_{\rho}$ satisfies 
    \begin{align}
        \frac{\gamma_{\rho}}{\alpha_{\rho}}&=\psi(\gamma_{\rho})
    \end{align}   
\end{theorem}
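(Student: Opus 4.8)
The plan is to collapse the vector stationarity condition to a scalar fixed-point equation and then analyze that equation using the structural assumptions on $\psi$. Setting $\dot{x}=0$ in~\eqref{HopU}, an equilibrium is precisely a solution of $x = W(u)\Psi(x)$. Substituting the ansatz $x = \gamma\xi^{\rho}$ with $\gamma\in\real\setminus\{0\}$, and using that $\psi$ is odd by~\eqref{eq: odd} together with $\xi^{\rho}\in\{-1,1\}^{N}$, one gets $\Psi(\gamma\xi^{\rho}) = \psi(\gamma)\,\xi^{\rho}$ entrywise. Expanding $W(u)=\frac{1}{N}\sum_{\mu}\alpha_{\mu}\xi^{\mu}{\xi^{\mu}}^{\top}$ and invoking the orthogonality relations~\eqref{eq:prot-mem2}, every cross term with $\mu\neq\rho$ vanishes while the $\mu=\rho$ term picks up a factor $N$, so $W(u)\Psi(\gamma\xi^{\rho}) = \alpha_{\rho}\psi(\gamma)\,\xi^{\rho}$. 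Since $\xi^{\rho}\neq\vectorzeros[N]$, the equilibrium condition is \emph{equivalent} to the scalar equation $\gamma = \alpha_{\rho}\psi(\gamma)$; dividing by $\alpha_{\rho}$ (legitimate once $\alpha_{\rho}>1$) gives the asserted identity $\gamma_{\rho}/\alpha_{\rho}=\psi(\gamma_{\rho})$.

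It then remains to prove that $\gamma = \alpha_{\rho}\psi(\gamma)$ has a nonzero solution if and only if $\alpha_{\rho}>1$. I would introduce $g(\gamma):=\alpha_{\rho}\psi(\gamma)-\gamma$, which is odd (as $\psi$ is odd), satisfies $g(0)=0$, and has $g'(0)=\alpha_{\rho}\psi'(0)-1=\alpha_{\rho}-1$ by~\eqref{eq: der}; by oddness it suffices to produce or exclude a \emph{positive} root. For the ``if'' direction, when $\alpha_{\rho}>1$ we have $g'(0)>0$, hence $g(\gamma)>0$ for small $\gamma>0$, whereas $g(\gamma)=\alpha_{\rho}\psi(\gamma)-\gamma\to-\infty$ as $\gamma\to+\infty$ since $\psi$ is bounded by~\eqref{eq: asy}; the intermediate value theorem yields a root $\gamma_{\rho}>0$.

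For the ``only if'' direction I would show that $\alpha_{\rho}\le 1$ forces $g(\gamma)<0$ for all $\gamma>0$, which with oddness ($g(\gamma)>0$ for $\gamma<0$) leaves $0$ as the unique root. The elementary fact driving this is the tangent-line bound $\psi(\gamma)<\gamma$ for every $\gamma>0$: since $\psi(0)=0$, $\psi'(0)=1$, and $\psi''<0$ on $(0,\infty)$ by~\eqref{eq: 2der}, the graph of $\psi$ on the positive axis lies strictly below its tangent at the origin, namely $z\mapsto z$. Hence for $\gamma>0$: if $\alpha_{\rho}\le 0$ then $\alpha_{\rho}\psi(\gamma)\le 0<\gamma$, while if $0<\alpha_{\rho}\le 1$ then $\alpha_{\rho}\psi(\gamma)\le\psi(\gamma)<\gamma$; in either case $g(\gamma)<0$. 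Taking the contrapositive completes the equivalence.

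The argument is largely bookkeeping; the one place needing care is making the ``only if'' direction cover all $\alpha_{\rho}\le 1$ uniformly (including negative saliency weights), which is exactly what the bound $\psi(\gamma)<\gamma$ handles, so I would isolate that inequality as the crux. A secondary point worth stating explicitly is that the first step is a genuine equivalence, so that nonexistence of a scalar root truly precludes an equilibrium of the form $\gamma_{\rho}\xi^{\rho}$.
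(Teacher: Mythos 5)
Your proposal is correct and follows essentially the same route as the paper's proof: reduce to the scalar fixed-point equation $\gamma=\alpha_{\rho}\psi(\gamma)$ via orthogonality and oddness, use the concavity-derived bound $\psi(\gamma)<\gamma$ for $\gamma>0$ to rule out nonzero roots when $\alpha_{\rho}\le 1$, and use the sign of the derivative at $0$ together with boundedness of $\psi$ and the intermediate value theorem to produce a root when $\alpha_{\rho}>1$. The only (minor, welcome) differences are that you make the oddness reduction to positive roots and the case $\alpha_{\rho}\le 0$ explicit, which the paper leaves implicit.
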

\begin{proof}
    The vector $x_{\rho}=\gamma_{\rho}\xi^{\rho}$ is an equilibrium for~\eqref{HopU} if and only if
    \begin{align}
        x_{\rho} &= \gamma_{\rho}\xi^{\rho}\nonumber\\
                &= W(u)\Psi(\gamma_{\rho}\xi^{\rho})\nonumber\\
                &= \frac{1}{N}\sum_{\mu=1}^{P}\alpha_\mu \xi^{\mu}{\xi^{\mu}}^{\top}\cdot{\xi^{\rho}}\psi(\gamma_{\rho})\nonumber\\
                &=\alpha_{\rho}\xi^{\rho}\psi(\gamma_{\rho})
    \end{align}    
    and hence if and only if
    \begin{equation}
        \gamma_{\rho}=\alpha_\rho\psi(\gamma_{\rho})
    \end{equation}
    The theorem is proved if we prove that 
    \begin{equation}
        \exists \gamma>0 \ : \ \gamma=\alpha\psi(\gamma)\quad\Leftrightarrow \quad \alpha>1
    \end{equation}    
    We start from ($\Rightarrow$). Observe that since $\psi''(z)<0$ for all $z>0$ then $\psi'(z)$ is strictly decreasing for $z>0$ and hence $\psi(z)<z$ for $z>0$. This proves that if there exists $\gamma>0$ such that $\gamma=\alpha\psi(\gamma)$, then $\alpha=\gamma/\psi(\gamma)>1$. We prove now ($\Leftarrow$). Consider the function $f(z):=z-\alpha\psi(z)$ and observe that $f(0)=0$, $f'(0)=1-\alpha<0$ and that $\lim_{z\to+\infty}f(z)=+\infty$. These facts imply that there exists $\gamma>0$ such that $f(\gamma)=0$ and this concludes the proof.
\end{proof} 
Fig.~\ref{fig:ActFun} provides a visualization of the existence condition in Theorem \ref{thm:existence}. By definition, retrievable memories need to be the corners of an hypercube \citep{Horn1987}, and we readily see which conditions on the saliency weights $\alpha$'s grant us their existence. If $\alpha\leq{1}$, then in the semipositive interval $\real_{+}$ the dissipation line $\frac{\gamma}{\alpha}$ has only one intersection with the activation function that coincides with the center of the hypercube. Instead, for values $\alpha>1$ we have two intersections, one coinciding with the desired corner of the hypercube.

\begin{figure}[h!]
    \centering
    \includegraphics[width=.7\linewidth]{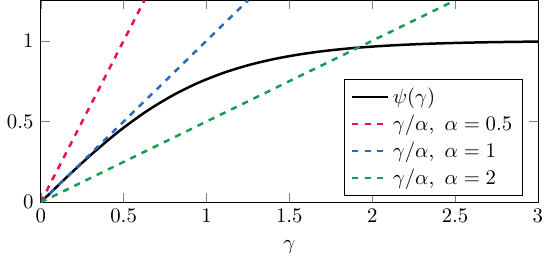}
    \caption{Intersections of the activation function $\psi(\gamma)=\tanh(\gamma)$ with the dissipation line $\frac{\gamma}{\alpha}$ for different values of $\alpha$. The dissipation lines where the parameter $\alpha\leq{1}$ have just one intersection with the chosen activation function. Instead, only for the value $\alpha=2>1$ we obtain two intersections between the dissipation line and the activation function in the semipositive interval $\real_{+}$.}
    \label{fig:ActFun}
\end{figure}

Notice that, since each equilibrium is of the form $x_{\rho}=\gamma_{\rho}\xi^{\rho}$, it suffices to choose a sufficiently fast saturating activation function to obtain that, in the space of membrane potentials, retrievable memories are of the form $\Psi(x_{\rho})=\xi^{\rho}\psi(\gamma_{\rho})\approx{\xi^{\rho}}$.

\section*{Stability of Equilibria and Stability Threshold}

\subsection*{Stability of Equilibria}
In order to determine when the equilibria $x_{\rho}=\gamma_{\rho}\xi^{\rho}$ of our system are stable, we first establish a general local stability condition for Hopfield-type systems. In what follows, given $g\in{C^{1}}(\real^{N};\real^{N})$, we denote the Jacobian of $g(\cdot)$ as the matrix $Dg(x)\in{\real^{N\times{N}}}$ such that ${[Dg(x)]_{ij}=\frac{\partial{g_{i}}}{\partial{x_{j}}}}(x)$. We now specifically address the stability of the IDP Hopfield model. 
\begin{theorem}[Local stability of equilibria and memories]\label{thm: stb-cnd}
Consider the dynamics~\eqref{HopU} with a fixed input $u\in\real^{N}$. Then
\begin{enumerate}
\item an equilibrium $x^{\star}\in{\real}^{N}$ is locally exponentially stable if and only if 
    \begin{equation}
        -[D\Psi(x^{\star})]^{-1}+W(u)\prec{0}
    \end{equation}
    \item if $x_{\rho}=\gamma_{\rho}\xi^{\rho}$ with $\gamma_{\rho}\in{\real}\setminus{\{0\}}$ is an equilibrium, then $x_{\rho}$ is locally exponentially stable if and only if
    \begin{equation}
        {\psi'(\gamma_{\rho})}<\frac{1}{ \max_{\mu=1,\dots,P}\:\alpha_{\mu}}
    \end{equation}
    \end{enumerate}
\end{theorem}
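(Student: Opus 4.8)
The plan is to reduce both items to the linearization of the dynamics at the equilibrium. Since $\psi\in C^{2}(\real;\real)$, the vector field $f(x)=-x+W(u)\Psi(x)$ is $C^{2}$, with Jacobian $Df(x^{\star})=-I+W(u)\,D\Psi(x^{\star})$, where $D\Psi(x^{\star})=\mathrm{diag}(\psi'(x^{\star}_{1}),\dots,\psi'(x^{\star}_{N}))$. First I would invoke Lyapunov's indirect method in both directions: $x^{\star}$ is locally exponentially stable if and only if $Df(x^{\star})$ is Hurwitz. The ``if'' part is classical; for ``only if'' I would note that a Jacobian eigenvalue with positive real part makes $x^{\star}$ unstable (Chetaev), while a purely imaginary eigenvalue obstructs exponential contraction along the corresponding center directions (the reduced dynamics on a center manifold has a Jacobian with zero real part, hence cannot be exponentially stable), so exponential stability forces the whole spectrum into the open left half-plane.

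For item~1, I would symmetrize the Jacobian. By condition~\eqref{eq: der} the diagonal matrix $D:=D\Psi(x^{\star})$ is symmetric positive definite, so $D^{1/2}$ and $D^{-1/2}$ are well defined, symmetric, and positive definite. The similarity transformation $A\mapsto D^{1/2}AD^{-1/2}$ carries $Df(x^{\star})$ to $-I+D^{1/2}W(u)D^{1/2}$, which is symmetric because $W(u)$ is symmetric; since similarity preserves the spectrum, $Df(x^{\star})$ is Hurwitz if and only if this symmetric matrix is Hurwitz, i.e.\ negative definite, i.e.\ $-I+D^{1/2}W(u)D^{1/2}\prec 0$. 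Conjugating by the invertible matrix $D^{-1/2}$ is a congruence and preserves inertia, so this is equivalent to $-D^{-1}+W(u)\prec 0$, that is, $-[D\Psi(x^{\star})]^{-1}+W(u)\prec 0$.

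For item~2, I would apply item~1 with $x^{\star}=x_{\rho}=\gamma_{\rho}\xi^{\rho}$. Since $\psi$ is odd by~\eqref{eq: odd}, $\psi'$ is even, and because $\xi^{\rho}_{i}\in\{-1,1\}$ one gets $\psi'(\gamma_{\rho}\xi^{\rho}_{i})=\psi'(\gamma_{\rho})$ for every $i$, hence $D\Psi(x_{\rho})=\psi'(\gamma_{\rho})I$ with $\psi'(\gamma_{\rho})>0$. Substituting, stability is equivalent to $W(u)\prec\psi'(\gamma_{\rho})^{-1}I$, i.e.\ $\lambda_{\max}(W(u))<\psi'(\gamma_{\rho})^{-1}$. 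Finally I would compute $\lambda_{\max}(W(u))$: the vectors $\xi^{\mu}/\sqrt{N}$ are orthonormal, so $W(u)=\sum_{\mu=1}^{P}\alpha_{\mu}\,(\xi^{\mu}/\sqrt{N})(\xi^{\mu}/\sqrt{N})^{\top}$ has spectrum $\{\alpha_{1},\dots,\alpha_{P}\}\cup\{0\}$, and since $x_{\rho}$ being an equilibrium forces $\alpha_{\rho}>1$ by Theorem~\ref{thm:existence}, we get $\lambda_{\max}(W(u))=\max_{\mu=1,\dots,P}\alpha_{\mu}$; rearranging yields $\psi'(\gamma_{\rho})<1/\max_{\mu}\alpha_{\mu}$.

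The main obstacle, and the only step requiring real care, is the converse half of Lyapunov's indirect method stated for \emph{exponential} rather than merely asymptotic stability; I would either cite a standard reference or include the brief center-manifold argument indicated above. The remaining steps form a routine chain of linear-algebra equivalences: similarity preserves the spectrum, congruence preserves inertia, and a real symmetric matrix is Hurwitz exactly when it is negative definite.
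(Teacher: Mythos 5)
Your proposal is correct and follows essentially the same route as the paper's proof: linearize, symmetrize the Jacobian via the similarity $D\Psi(x^{\star})^{1/2}(\cdot)D\Psi(x^{\star})^{-1/2}$, use congruence/inertia to arrive at $-[D\Psi(x^{\star})]^{-1}+W(u)\prec 0$, invoke the exponential-stability form of Lyapunov's indirect method (the paper simply cites Khalil, Thm.~4.15, rather than re-deriving the converse as you sketch), and for item~2 use oddness of $\psi$ to get $D\Psi(x_{\rho})=\psi'(\gamma_{\rho})I$. The only cosmetic difference is the final step: you read the spectrum of $W(u)$ directly off the orthonormal family $\xi^{\mu}/\sqrt{N}$ (correctly noting that $\alpha_{\rho}>1$ guarantees $\lambda_{\max}(W(u))=\max_{\mu}\alpha_{\mu}$ despite the zero eigenvalues), whereas the paper uses $\lambda_{\max}(AA^{\top})=\lambda_{\max}(A^{\top}A)$ with $A=\Xi\,\mathrm{diag}(\alpha_{1},\dots,\alpha_{P})^{1/2}$ --- your version is arguably cleaner, as it avoids square roots of possibly negative saliency weights.
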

\begin{proof}\ \newline 
\begin{enumerate} 
    \item Consider, for fixed $u\in\real^{N}$, the field associated to the dynamics~\eqref{HopU}
    \begin{equation}
        F(x^{\star})=-x+W(u)\Psi(x)
    \end{equation}
    and compute the corresponding Jacobian at the equilibrium $x^{\star}$ 
    \begin{equation}
        DF(x^{\star})=-I_{N}+W(u){D\Psi(x^{\star})}
    \end{equation}
    Since $DF(x)\in{\real^{N\times{N}}}$ is not necessarily symmetric, we apply the similarity transformation
    \begin{align}
        D\Psi(x^{\star})^{\frac{1}{2}}{DF(x^{\star})}{D\Psi(x^{\star})^{-\frac{1}{2}}}&=D\Psi(x^{\star})^{\frac{1}{2}}\underbrace{\big{(}-[D\Psi(x^{\star})]^{-1}+W(u)\big{)}}_{S(x^{\star})}{D\Psi(x^{\star})^{\frac{1}{2}}}\nonumber
    \end{align}
    where we have exploited the positive definiteness of $D\Psi(x)\in{\real^{N\times{N}}}$. The product
    \begin{equation}
        D\Psi(x^{\star})^{\frac{1}{2}}{S(x^{\star})}D\Psi(x^{\star})^{\frac{1}{2}}
    \end{equation} is a congruence transformation that preserves the matrix inertia \citep{Horn1985}. Now, the above product is symmetric, and in particular, the Jacobian $DF(x^{\star})$ is Hurwitz if and only if
    \begin{equation}\label{eq:S}
        S(x^{\star})=S(x^{\star})^{\top}\prec {0}
    \end{equation}
    Then the thesis of the statement follows from \citep[Theorem 4.15]{Khalil2002}. 
    \item Note that $D\Psi(x_\rho)$  is a diagonal matrix with $i$-th diagonal entry equal to $\psi'\bigl( [x_{\rho}]_i\bigr)$. 
    Since $\xi^{\rho}\in\{-1,+1\}^N$, we know that each entry of $x_{\rho}=\gamma_{\rho}\xi^{\rho}$ is  equal to $\pm \gamma_\rho$. 
    Next, we note that, for all $z\in\real$,
    \begin{align}
        & \psi(z) = -\psi(-z) \ \ \implies \ \  \psi'(-z) =\frac{d}{dz}\big{(}-\psi(-z)\big{)}
        =\frac{d}{dz}\psi(z)=\psi'(z)
    \end{align}
    Therefore, $D\Psi(x_\rho) = \psi'(\gamma_\rho)I_N$ and, in turn,
    the local exponential stability condition~\eqref{eq:S} reads as
    \begin{align}\label{eq:stab-cond}
        \frac{1}{N}\sum_{\mu=1}^{P}\alpha_{\mu}\xi^{\mu}{\xi^{\mu}}^{\top} 
        \prec \frac{1}{\psi'(\gamma_{\rho})}I_{N}
    \end{align}
    Let $\Xi:=\frac{1}{\sqrt{N}}\begin{bmatrix}\xi^1\cdots \xi^P\end{bmatrix}$ and observe that condition~\eqref{eq:stab-cond} can be written as in compact form as
    $$
        \lambda_{\max}(\Xi\,\mathrm{diag}(\alpha_1,\dots,\alpha_P)\Xi^\top)<\frac{1}{\psi'(\gamma_{\rho})}
    $$
    Letting $A:=\Xi\,\mathrm{diag}(\alpha_1,\dots,\alpha_P)^{\frac 1 2}$ and using the identity $\lambda_{\max}(AA^\top)=\lambda_{\max}(A^\top A)$, the left-hand side of the previous inequality equals
    \begin{align}
        \lambda_{\max}\left(A^\top A\right)&= \lambda_{\max}\left(\mathrm{diag}(\alpha_1,\dots,\alpha_P)^{\frac 1 2}\Xi^\top\Xi\,\mathrm{diag}(\alpha_1,\dots,\alpha_P)^{\frac 1 2}\right)\nonumber\\&= \lambda_{\max}(\mathrm{diag}(\alpha_1,\dots,\alpha_P))=\max_{\mu=1,\dots,P}\alpha_\mu
    \end{align}   
    where we used the fact that $\Xi^\top \Xi=I$.
\end{enumerate}
\end{proof}

\subsection*{Stability Threshold}
From the previous stability results, it is natural to wonder whether there exists a critical parameter $\subscr{\alpha}{stability}$ that determines the stability of the memory patterns, and how it is possible to quantify it. In what follows, to lighten the notation, we will refer to $\subscr{\alpha}{stability}$ as $\alpha^{\star}$. 
For simplicity and without loss of generality, we impose the following order on the saliency weights
    \begin{equation}
        \alpha_{1}>\dots>\alpha_{P}
    \end{equation}
We now present the main theorem of the section, in which we prove the existence of a critical threshold $\alpha^{\star}$ for the stability of the retrievable memories of the IDP Hopfield model.
\begin{theorem}[Critical saliency $\alpha^{\star}$]
    Consider the dynamical system~\eqref{HopU} and let $\alpha_{1}>1$. Then
    \begin{enumerate}
        \item  there exists at least one locally exponentially stable equilibrium,
        \item let $\alpha^{\star}:=\frac{\gamma^\star}{\psi(\gamma^\star)}$ where $\gamma^*>0$ is such that $\psi'(\gamma^{\star})=\frac{1}{\alpha_{1}}$. Then
        \begin{equation}\label{eq:critical-threshold}
            x_{\rho}=\gamma_{\rho}\xi^{\rho}\ \text{is a locally exponentially stable equilibrium}\qquad\iff\qquad{\alpha_{\rho}}>\alpha^{\star}
        \end{equation}
    \end{enumerate}
\end{theorem}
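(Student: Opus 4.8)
The plan is to reduce the entire statement to the pointwise stability criterion already established in Theorem~\ref{thm: stb-cnd}(2) — that a memory equilibrium $x_\rho=\gamma_\rho\xi^\rho$ is locally exponentially stable iff $\psi'(\gamma_\rho)<1/\max_\mu\alpha_\mu$, which under the assumed ordering reads $\psi'(\gamma_\rho)<1/\alpha_1$ — together with two elementary monotonicity facts about $\psi$ that I would record first. (i) Since $\psi''<0$ on $(0,\infty)$, the derivative $\psi'$ is strictly decreasing there; because $\psi'(0)=1$ and a bounded concave increasing function on $(0,\infty)$ has derivative tending to $0$, the map $\psi'$ is a strictly decreasing bijection of $(0,\infty)$ onto $(0,1)$. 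As $\alpha_1>1$ gives $1/\alpha_1\in(0,1)$, this makes $\gamma^\star>0$ with $\psi'(\gamma^\star)=1/\alpha_1$, and hence $\alpha^\star=\gamma^\star/\psi(\gamma^\star)>0$, well defined. (ii) The ``amplitude-to-saliency'' map $h(\gamma):=\gamma/\psi(\gamma)$ is strictly increasing on $(0,\infty)$ with $h(\gamma)\to 1^+$ as $\gamma\to 0^+$ (so $h>1$ throughout): writing $h'(\gamma)=\big(\psi(\gamma)-\gamma\psi'(\gamma)\big)/\psi(\gamma)^2$ and setting $g(\gamma):=\psi(\gamma)-\gamma\psi'(\gamma)$, one checks $g(0)=0$ and $g'(\gamma)=-\gamma\psi''(\gamma)>0$ for $\gamma>0$, so $g>0$ and $h'>0$ on $(0,\infty)$.

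For claim~2 I would then argue: by Theorem~\ref{thm:existence}, $x_\rho=\gamma_\rho\xi^\rho$ is an equilibrium for some $\gamma_\rho\neq 0$ precisely when $\alpha_\rho>1$, in which case — taking $\gamma_\rho>0$ without loss of generality, since by oddness of $\psi$ both the equation $\gamma=\alpha_\rho\psi(\gamma)$ and the stability condition are invariant under $\gamma_\rho\mapsto-\gamma_\rho$ — the relation $\gamma_\rho/\alpha_\rho=\psi(\gamma_\rho)$ is exactly $\alpha_\rho=h(\gamma_\rho)$. By Theorem~\ref{thm: stb-cnd}(2) this equilibrium is locally exponentially stable iff $\psi'(\gamma_\rho)<1/\alpha_1=\psi'(\gamma^\star)$; by the strict monotonicity of $\psi'$ in (i) this is equivalent to $\gamma_\rho>\gamma^\star$, and by the strict monotonicity of $h$ in (ii) this is equivalent to $\alpha_\rho=h(\gamma_\rho)>h(\gamma^\star)=\alpha^\star$. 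Since (ii) also yields $\alpha^\star=h(\gamma^\star)>1$, the condition $\alpha_\rho>\alpha^\star$ already implies $\alpha_\rho>1$, so $x_\rho$ is then genuinely an equilibrium; conversely, if $\alpha_\rho\le\alpha^\star$ then $x_\rho$ is either not an equilibrium ($\alpha_\rho\le1$) or an unstable one ($1<\alpha_\rho\le\alpha^\star$). This is exactly~\eqref{eq:critical-threshold}.

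Claim~1 then follows by applying claim~2 with $\rho=1$: using $\alpha_1=1/\psi'(\gamma^\star)$ and $\alpha^\star=\gamma^\star/\psi(\gamma^\star)$ one computes $\alpha_1-\alpha^\star=g(\gamma^\star)/\big(\psi'(\gamma^\star)\psi(\gamma^\star)\big)>0$, so $\alpha_1>\alpha^\star$, and hence $x_1=\gamma_1\xi^1$ is a locally exponentially stable equilibrium, establishing existence of at least one. The genuinely load-bearing step here is the monotonicity (ii) — equivalently, the positivity of $g(\gamma)=\psi(\gamma)-\gamma\psi'(\gamma)$ — since both the equivalence~\eqref{eq:critical-threshold} and the inequality $\alpha^\star<\alpha_1$ needed for claim~1 rest on it; the only remaining subtlety is bookkeeping, namely tracking how the inequalities flip when the decreasing map $\psi'$ is composed with the increasing map $h$, and checking $\alpha^\star>1$ so that~\eqref{eq:critical-threshold} is read as an existence-and-stability statement rather than a stability-only one.
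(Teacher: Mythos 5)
Your proposal is correct and follows essentially the same route as the paper's own proof: both rest on the positivity of $g(\gamma)=\psi(\gamma)-\gamma\psi'(\gamma)$ (the paper's $f$), the resulting strict monotonicity of $\gamma\mapsto\gamma/\psi(\gamma)$, the strict decrease of $\psi'$ on $(0,\infty)$, and the reduction to the criterion $\psi'(\gamma_\rho)<1/\alpha_1$ from Theorem~\ref{thm: stb-cnd}. The only cosmetic difference is that you obtain claim~1 as a corollary of claim~2 via $\alpha_1>\alpha^\star$, whereas the paper proves it directly by evaluating the same inequality $\alpha(\gamma)<1/\psi'(\gamma)$ at $\gamma_1$ instead of at $\gamma^\star$.
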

\begin{proof} \ 
    \begin{enumerate} 
        \item  We define $\alpha(\gamma) := \gamma/\psi(\gamma)$ and let $\gamma_{1}>0$ be the positive solution of 
            \begin{equation}\label{eq: alp-1}
                \alpha_{1}=\alpha(\gamma_{1})=\frac{\gamma_{1}}{\psi(\gamma_{1})}
            \end{equation}
        and observe that
            \begin{equation}\label{eq: alp-dif}
                \alpha(\gamma)<\frac{1}{\psi'(\gamma)}\quad\forall\gamma>0
            \end{equation}
        As a matter of fact, it holds 
            \begin{equation}
                \frac{1}{\psi'(\gamma)}-\alpha(\gamma)=\frac{\psi(\gamma)-\gamma\psi'(\gamma)}{\psi'(\gamma)\psi(\gamma)}>0\quad{\forall{\gamma}>{0}}
            \end{equation}
        since $\psi'(\gamma)$, $\psi(\gamma)$, and $f(\gamma):=\psi(\gamma)-\gamma\psi'(\gamma)$ are strictly positive for all $\gamma> 0$. The fact that $f(\gamma)>0$ for all $\gamma>0$ follows by noting that $f(0)=0$ and $f'(\gamma)=-\gamma\psi''(\gamma)>0$, $\forall\gamma\ne 0$. 
         Then it follows from~\eqref{eq: alp-1}, \eqref{eq: alp-dif}] that        
            \begin{equation}
                \psi'(\gamma_{1})<\frac{1}{\alpha_{1}}
            \end{equation}        
        and thus, using Theorem \ref{thm: stb-cnd}, there exists at least one locally exponentially stable equilibrium point for~\eqref{HopU}.
        \item Since $\alpha_{1}>1$, then using~\eqref{eq: der},~\eqref{eq: asy},~\eqref{eq: 2der}] we have that $\exists{!}\gamma^{\star}>0$ such that
            \begin{equation}
                \psi'(\gamma^{\star})=\frac{1}{\alpha_{1}}
            \end{equation}
        Moreover, from~\eqref{eq: 2der} it is immediate to observe that        
            \begin{equation}
                \psi'(\gamma_{\rho})<\frac{1}{\alpha_{1}}\iff{\gamma_{\rho}>\gamma^{\star}}
            \end{equation}
        Notice now that 
        \begin{equation}
            \alpha'(\gamma)=\frac{f(\gamma)}{\psi(\gamma)^{2}}>0\quad\forall\gamma>0
        \end{equation}
        where $\alpha(\gamma)$ and $f(\gamma)$ have been defined in the proof of the first statement. The previous equation implies that $\alpha(\gamma)$ is monotonically increasing for $\gamma>0$. Consequently, the stability condition of Theorem \ref{thm: stb-cnd} is satisfied if and only if $\alpha_{\rho}=\alpha(\gamma_{\rho})>\alpha(\gamma^{\star})=\alpha^{\star}$.
    \end{enumerate}
\end{proof}

We now want to exploit the definition of $\alpha^{\star}={\gamma^{\star}}/{\psi(\gamma^{\star})}$ to provide the reader with an heuristic of how a strong dominance of a saliency weight $\alpha_{1}$ may disrupt the stable retrieval of any memory but $x_{1}=\gamma_{1}\xi^{1}$. Since we know that $\gamma^{\star}>0$ is the solution of the equation
    \begin{equation}
        \psi'(\gamma^{\star})=\frac{1}{\alpha_{1}}
    \end{equation}
then as $\alpha_{1}=\alpha(\gamma_{1})$ increases, we have that $\psi'(\gamma^{\star})$ decreases, which by~\eqref{eq: 2der}] is possible only if $\gamma^{\star}$ is increasing as well. It is then evident from~\eqref{eq: der} and \eqref{eq: asy} that when $\gamma^{\star}$ increases, so does $\alpha^{\star}=\alpha(\gamma^{\star})$. Thus, we have proved that the growth of the stability threshold $\alpha^{\star}$ depends exclusively on the dominant saliency weight $\alpha_{1}$, and moreover that the two are proportionally related to each other. Consequently, the sole growth of the dominant saliency weight $\alpha_{1}$ is sufficient to push the stability threshold $\alpha^{\star}$ above all the other saliency weights, so that the only stable memory left is indeed $x_{1}=\gamma_{1}\xi^{1}$.

\begin{figure}[th!]
    \centering
    \includegraphics[width=.7\linewidth]{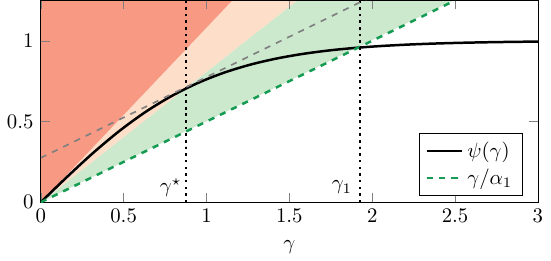}
    \caption{Visual representation of the existence and stability results presented in the previous sections. The saliency weights $\alpha$ such that the dissipation line $\gamma/\alpha$ belongs to: (1) the green region are associated to stable retrievable memories; (2) the yellow region are associated to unstable retrievable memories; (3) the red region to non-retrievable memories.}
    \label{fig:ActFun2}
\end{figure}

\section*{Energy of the IDP Hopfield Model and Energy Wells}
One of the most interesting features of the Hopfield model is the existence of an energy function that details how the trajectory of the system converge to the existing equilibria.
The energy of the IDP model is similar to the energy of the classic Hopfield model, under the assumption that $u\in{\real^{N}}$ is constant. Specifically, we introduce the following definition.

\begin{definition}[Energy of the IDP Hopfield model]
    Consider the dynamics~\eqref{HopU}]. Then the associated energy function is
    \begin{equation}\label{eq: energy_SI}
        \mathrm{E}(x;u)=-\frac{1}{2}\Psi(x)^{\top}W(u)\Psi(x)+x^{\top}\Psi(x)-\sum_{i=1}^{N}\int_{0}^{x_{i}}\psi_{i}(z)\:dz
    \end{equation}
    and we additionally define the energy per node as
    \begin{equation}
        \varepsilon(x;u)=\frac{\mathrm{E}(x;u)}{N}
    \end{equation}
\end{definition}
The following result characterizes the energy landscape of the IDP Hopfield model.
\begin{theorem}[Energy wells] Given a fixed input $u\in\real^{N}$, the energy function of the IDP Hopfield model~\eqref{HopU} has negative total time derivative
    \begin{equation}
        \frac{d\mathrm{E}}{dt}(x(t);u(t))|_{u(t)\equiv{u}}\leq{0}
    \end{equation}
    Moreover, given two equilibria $x_1=\gamma_1 \xi^1$, $x_2=\gamma_2 \xi^2$, $\gamma_1$, $\gamma_2>0$, associated to saliency weights $\alpha_{1},\alpha_{2}>1$ such that ${\alpha_{1}>\alpha_{2}}$, then
    \begin{equation}
        \varepsilon(x_{1};u)
        <\varepsilon(x_{2};u)
    \end{equation}
\end{theorem}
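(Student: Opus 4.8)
The plan is to treat the two assertions separately, both being short once the setup is in place.

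For the decrease of the energy along trajectories, I would compute $\nabla_x\mathrm{E}(x;u)$ directly from \eqref{eq: energy_SI}. Using that $W(u)$ is symmetric, the chain rule gives, componentwise, $\partial_{x_j}\mathrm{E}(x;u) = -\psi'(x_j)\,[W(u)\Psi(x)]_j + \psi(x_j) + x_j\psi'(x_j) - \psi(x_j) = -\psi'(x_j)\bigl(-x_j + [W(u)\Psi(x)]_j\bigr)$, so that $\nabla_x\mathrm{E}(x;u) = -D\Psi(x)\,F(x)$, where $F(x) = -x + W(u)\Psi(x)$ is the vector field of \eqref{HopU} and $D\Psi(x) = \mathrm{diag}(\psi'(x_1),\dots,\psi'(x_N))$. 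Since $u$ is constant, $\frac{d}{dt}\mathrm{E}(x(t);u) = \nabla_x\mathrm{E}^{\top}\dot x = -\dot x^{\top} D\Psi(x)\,\dot x \le 0$, with equality only at equilibria, because $D\Psi(x)\succ 0$ by \eqref{eq: der}.

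For the ordering of the energies, I would first obtain a closed form for $\varepsilon$ at a retrievable memory $x_\rho = \gamma_\rho\xi^\rho$. Oddness of $\psi$ together with $\xi^\rho\in\{-1,1\}^N$ gives $\Psi(x_\rho) = \psi(\gamma_\rho)\xi^\rho$; the orthogonality relations \eqref{eq:prot-mem2} give $W(u)\Psi(x_\rho) = \alpha_\rho\psi(\gamma_\rho)\xi^\rho$ and ${\xi^\rho}^{\top}\xi^\rho = N$; and in the integral term each coordinate contributes $\int_0^{\pm\gamma_\rho}\psi(z)\,dz = \int_0^{\gamma_\rho}\psi(z)\,dz$, again by oddness. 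Writing $\Phi(\gamma) := \int_0^{\gamma}\psi(z)\,dz$, this yields $\varepsilon(x_\rho;u) = -\tfrac12\alpha_\rho\psi(\gamma_\rho)^2 + \gamma_\rho\psi(\gamma_\rho) - \Phi(\gamma_\rho)$. Invoking the equilibrium identity $\gamma_\rho = \alpha_\rho\psi(\gamma_\rho)$ from Theorem~\ref{thm:existence} to replace $\alpha_\rho\psi(\gamma_\rho)^2$ by $\gamma_\rho\psi(\gamma_\rho)$ collapses this to $\varepsilon(x_\rho;u) = g(\gamma_\rho)$, where $g(\gamma) := \tfrac12\gamma\psi(\gamma) - \Phi(\gamma)$.

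It then remains to compare $g(\gamma_1)$ with $g(\gamma_2)$, and for this I would reuse two facts already established in the proof of the critical-saliency theorem: the map $\gamma\mapsto\alpha(\gamma) = \gamma/\psi(\gamma)$ is strictly increasing on $(0,\infty)$, so $\alpha_1>\alpha_2$ forces $\gamma_1>\gamma_2>0$; and $f(\gamma) := \psi(\gamma) - \gamma\psi'(\gamma) > 0$ for all $\gamma>0$, since $f(0)=0$ and $f'(\gamma) = -\gamma\psi''(\gamma)>0$ by \eqref{eq: 2der}. A one-line differentiation gives $g'(\gamma) = \tfrac12\gamma\psi'(\gamma) - \tfrac12\psi(\gamma) = -\tfrac12 f(\gamma) < 0$, so $g$ is strictly decreasing on $(0,\infty)$, and therefore $\varepsilon(x_1;u) = g(\gamma_1) < g(\gamma_2) = \varepsilon(x_2;u)$. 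I do not expect a genuine obstacle; the only delicate point is the bookkeeping in evaluating $\mathrm{E}(x_\rho;u)$ — pulling $\psi$ through $\xi^\rho$ via oddness, using oddness again to make the integral term sign-independent, and using orthogonality to surface the factor $\alpha_\rho$ — followed by remembering to substitute the equilibrium relation so that $\varepsilon$ becomes a function of $\gamma_\rho$ alone, which is precisely what makes the monotonicity comparison work.
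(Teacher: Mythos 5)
Your proof is correct, and the overall architecture coincides with the paper's: the same computation $\nabla_x\mathrm{E}=-D\Psi(x)F(x)$ giving $\dot{\mathrm{E}}=-\dot{x}^{\top}D\Psi(x)\dot{x}\le 0$, the same closed form $\varepsilon(x_{\rho})=-\tfrac12\alpha_{\rho}\psi(\gamma_{\rho})^2+\gamma_{\rho}\psi(\gamma_{\rho})-\int_0^{\gamma_{\rho}}\psi(z)\,dz$ collapsed via the equilibrium identity, and the same reduction of $\alpha_1>\alpha_2$ to $\gamma_1>\gamma_2$ through the monotonicity of $\alpha(\gamma)=\gamma/\psi(\gamma)$. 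The only genuine divergence is the last step. The paper keeps $\varepsilon(x_\rho)=\tfrac12\gamma_\rho^2/\alpha_\rho-\int_0^{\gamma_\rho}\psi(z)\,dz$ and compares the two energies by bounding $\int_{\gamma_2}^{\gamma_1}\psi(z)\,dz$ from below with $\int_{\gamma_2}^{\gamma_1}z/\alpha_1\,dz$, which requires a small algebraic cancellation at the end. You instead write $\varepsilon(x_\rho)=g(\gamma_\rho)$ with $g(\gamma)=\tfrac12\gamma\psi(\gamma)-\int_0^\gamma\psi(z)\,dz$ and observe $g'(\gamma)=-\tfrac12\bigl(\psi(\gamma)-\gamma\psi'(\gamma)\bigr)=-\tfrac12 f(\gamma)<0$, reusing the positivity of $f$ already established for the stability threshold. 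Your route is marginally cleaner: it makes the energy at equilibrium a function of $\gamma_\rho$ alone and turns the comparison into a one-line monotonicity statement, whereas the paper's integral estimate also yields the side fact $\varepsilon(x_\rho)<0$ along the way. Both are complete; nothing is missing from yours.
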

\begin{proof}
    Consider the total time derivative of the energy function associated to the IDP Hopfield model
    \begin{align}
        \frac{d\mathrm{E}}{dt}(x(t);u(t))|_{u(t)\equiv{u}}&=\frac{\partial{\mathrm{E}}}{\partial{x}}(x(t);u(t))|_{u(t)\equiv{u}}\frac{dx}{dt}(t)+\frac{\partial{\mathrm{E}}}{\partial{u}}(x(t);u(t))|_{u(t)\equiv{u}}\underbrace{\frac{du}{dt}(t)}_{\equiv{0}}\nonumber\\
        &=\frac{\partial{\mathrm{E}}}{\partial{x}}(x(t);u)\frac{dx}{dt}(t)=-\frac{dx}{dt}(t)^{\top}D\Psi(x(t))\frac{dx}{dt}(t)\leq{0}
    \end{align}
    where we have used the fact that $D\Psi(x)\succ{0}$, and consequently $\dot{\mathrm{E}}=0\iff{\dot{x}=0}$. \\We now want to explicitly compute the energy associated to a generic fixed point $x_{\rho}=\gamma_{\rho}\xi^{\rho},\ \gamma_{\rho}>0$, taking into account our homogeneity assumptions on the activation function. For notational simplicity, we will use $\mathrm{E}(x;u)=\mathrm{E}(x)$ and $\varepsilon(x,u)=\varepsilon(x)$    
    \begin{align}
        \mathrm{E}(x_{\rho})&=-\frac{1}{2}\Psi(x_{\rho})^{\top}W(u)\Psi(x_{\rho})+x_{\rho}^{\top}\Psi(x_{\rho})-\sum_{i=1}^{N}\int_{0}^{{x_{\rho}}_{i}}\psi(z)\:dz\nonumber\\
        &=-\frac{1}{2}\psi(\gamma_{\rho}){\xi^{\rho}}^{\top}W(u)\xi^{\rho}\psi(\gamma_{\rho})+\gamma_{\rho}{\xi^{\rho}}^{\top}\xi^{\rho}\psi(\gamma_{\rho})-N\int_{0}^{\gamma_{\rho}}\psi(z)\:dz\nonumber\\
        &=-\frac{N}{2}\alpha_{\rho}\psi(\gamma_{\rho})^{2}+N\gamma_{\rho}\psi(\gamma_{\rho})-N\int_{0}^{\gamma_{\rho}}\psi(z)\:dz\\
    \end{align}
    
    Considering now the fixed point condition
    \begin{equation}
        \frac{\gamma_{\rho}}{\alpha_{\rho}}=\psi(\gamma_{\rho})
    \end{equation}
    and switching to the energy per node    
    \begin{align}
        \varepsilon(x_{\rho})&=\frac{1}{2}\gamma_{\rho}\psi(\gamma_{\rho})-\int_{0}^{\gamma_{\rho}}\psi(z)\:dz\\
        &=\frac{1}{2}\frac{\gamma_{\rho}^{2}}{\alpha_{\rho}}-\int_{0}^{\gamma_{\rho}}\psi(z)\:dz
    \end{align}
    
    Now, considering
    \begin{equation}
        F(\gamma,\alpha)=\psi(\gamma)-\frac{\gamma}{\alpha}
    \end{equation}
    we know that $F(\gamma,\alpha_{\rho})>0$, ${\forall{\gamma}\in{(0,\gamma_{\rho})}}$, from which we obtain   
    \begin{align}
        \varepsilon(x_{\rho})&=\frac{1}{2}\frac{\gamma_{\rho}^{2}}{\alpha_{\rho}}-\int_{0}^{\gamma_{\rho}}\psi(z)\:dz\\
        &<\frac{1}{2}\frac{\gamma_{\rho}^{2}}{\alpha_{\rho}}-\int_{0}^{\gamma_{\rho}}\frac{z}{\alpha_{\rho}}\:dz=0
    \end{align}
    
    Considering now $\alpha_{1},\alpha_{2}>1,\ {\alpha_{1}}>\alpha_{2}$, we know that $F(\gamma_{2},\alpha_{2})=F(\gamma_{1},\alpha_{1})=0$. Since $F$ is monotonically increasing in $\alpha$ and decreasing in $\gamma$, then $\gamma_{1}>\gamma_{2}$. Therefore, it follows that
    \begin{align}
        \varepsilon(x_{1})-\varepsilon(x_{2})&=\frac{1}{2}\bigg{(}\frac{\gamma_{1}^{2}}{\alpha_{1}}-\frac{\gamma_{2}^{2}}{\alpha_{2}}\bigg{)}-\int_{\gamma_{2}}^{\gamma_{1}}\psi(z)\:dz\\
        &<\frac{1}{2}\bigg{(}\frac{\gamma_{1}^{2}}{\alpha_{1}}-\frac{\gamma_{2}^{2}}{\alpha_{2}}\bigg{)}-\int_{\gamma_{2}}^{\gamma_{1}}\frac{z}{\alpha_{1}}\:dz\\
        &=-\frac{1}{2}\bigg{(}\frac{\gamma_{2}^{2}}{\alpha_{2}}-\frac{\gamma_{2}^{2}}{\alpha_{1}}\bigg{)}<0
    \end{align}
\end{proof}
From Fig.~\ref{fig:Energy} it is possible to observe how the increase of a specific saliency weight significantly deepens the energy well associated to the corresponding memory. The varying depth of the energy wells provides a clear graphical interpretation of the stability of the different memories encoded in the IDP Hopfield model. Specifically, considering panel Fig.~\ref{fig:Energy}(d), it is easy to observe how an appropriate choice of a perturbation $\delta{u}\in\real^{N}$ for the dynamics~\eqref{HopU} may easily push the state trajectory from the flatter to the deeper energy well. At the same time, the same perturbation $\delta{u}$ may be too weak to reverse the previous jump, thus leaving the state trajectory inevitably confined to the basin of attraction of the memory associated to the deeper energy well.

\begin{figure}[th!]
    \centering
    \includegraphics[width=.75\linewidth]{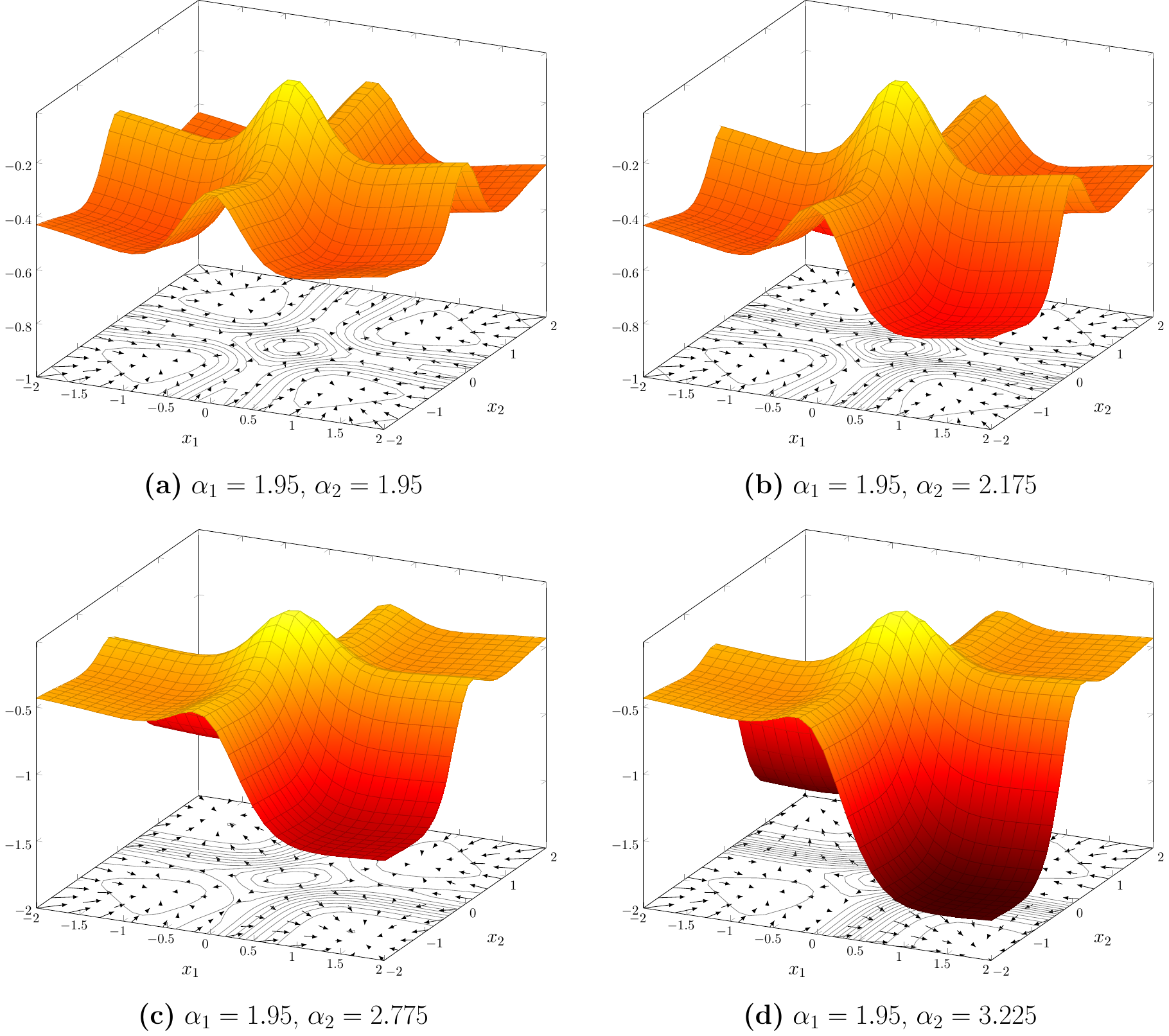}
    \caption{Input-driven modulation of the energy landscape of a $2$-d Hopfield model having orthogonal memories $\xi^{1}=[1\ 1]$ and $\xi^{2}=[1\ -1]$. (a) Energy landscape with $\alpha_{1}=\alpha_{2}$, which reduces to the case of the classic Hopfield model. As can be observed, the depth of the minima associated to the memories $\xi^{1}$ and $\xi^{2}$ is the same. (b-d) Energy landscape with constant $\alpha_{1}$ and increasing $\alpha_{2}$. As can be observed, the depth of the minima associated to $\xi^{1}$ remains unvaried, while the depth of the minima associated to memory $\xi^{2}$ significantly deepens as $\alpha_{2}$ grows.} 
    \label{fig:Energy}
\end{figure}

\section*{Stochastic Dynamics for the Classic and IDP Hopfield}
We now want to exploit some classical results on stochastic dynamics to understand how the models perform retrieval when embedded in an environment disrupted by noise. Consider a system of the type
\begin{equation}\label{StocSys}
    \begin{cases}
        \dot{x}(t)=\mathcal{F}(x)+G(x)\eta(t)\\
        x(0)=x_{0}\in{\real^{N}}
    \end{cases}
\end{equation}
where $G\in{C^{2}(\real^{N};\real^{N\times{N}}})$ and the noise obeys the following statistics
\begin{align}
    \mathbb{E}[\eta(t)]&\equiv{0}\in{\real}^{N}\\
    \mathbb{E}[\eta(t)\eta(\tau)]&=\sigma^{2}I_{N}\delta(t-\tau)
\end{align}

The probability density function $\mathbb{P}(t,x)$ associated with the evolution of the trajectories of the stochastic system~\eqref{StocSys} is described by the following parabolic P.D.E., known in the literature as Fokker-Planck Equation
\begin{equation}\label{FPE_SI}
        \frac{\partial{\mathbb{P}}}{\partial{t}}(t,x)=-\sum_{i=1}^{N}\frac{\partial}{\partial{x_{i}}}[\mathcal{F}_{i}(x)\mathbb{P}(t,x)]\!+\!\frac{\sigma^{2}}{2}\!\sum_{i,j=1}^{N}\frac{\partial^{2}}{\partial{x_{i}\partial{x_{j}}}}[D_{ij}(x)\mathbb{P}(t,x)]
\end{equation}
where $D(x)=G(x)G(x)^{\top}$ is the state-dependent covariance matrix for the noise increments. In our setting, we expressly want to avoid any assumption on the state-dependent structure of the noise, and for this reason we consider the following
    \begin{equation}
        \dot{x}(t) = \mathcal{F}(x) + \eta(t)
    \end{equation}
where we choose the Ito formalism for the S.D.E. simulation. Notice that for both the classic and IDP Hopfield model, the drift term can be written as $\mathcal{F}(x)=D\Psi(x)^{-1}\nabla_{x}\mathrm{E}(x,W)$, and consequently our S.D.E. becomes
    \begin{equation}\label{eq: SDEsim}
        \dot{x} = D\Psi(x)^{-1}\nabla_{x}\mathrm{E}(x,W) + \eta(t)
    \end{equation}
Given that $D\Psi(x)$ is a positive definite diagonal matrix, so is $D\Psi(x)^{-1}$, and therefore it can be used as a tailored state-dependent covariant matrix by choosing $G(x)=D\Psi(x)^{-1/2}$. Following this observation, Wong \citep{wong1991stochastic} proposed the model known as diffusion state machine
    \begin{equation}\label{eq: dif-mac}
        \dot{x}(t) = -D\Psi(x)^{-1}\nabla_{x}\mathrm{E}(x,W)+\sigma^{2}\nabla\cdot(D\Psi(x)^{-1}) + \sqrt{2}{D\Psi(x)^{-1/2}}\eta(t)
    \end{equation}
for which the associated Fokker-Planck equation admits a closed form stationary solution
    \begin{equation}\label{eq: SPD}
        \mathbb{P}_{\infty}(x)=Z^{-1}e^{-\frac{E(x;W)}{\sigma^{2}}}
    \end{equation}

Notice that the stationary probability density~\eqref{eq: SPD} strictly depends on the energy~\eqref{eq: energy_SI}, and consequently most of the probability density will concentrate around the deepest minima. A quick comparison of Fig.~\ref{fig:PDF}(a-b) and Fig.~\ref{fig:PDF}(c-d) respectively highlights how the solutions of~\eqref{FPE_SI} in the simple case of state-independent noise and of the diffusion state machine display very similar density profiles. Consequently, although unable to provide a closed form solution of the stationary Fokker-Planck equation associated to~\eqref{eq: SDEsim}, it is reasonable to suppose that even for simple state-independent noise the stationary probability density will closely depend on the energy~\eqref{eq: energy_SI} of the Hopfield model.
In the displayed $2$-d case, multiplication by a radially decaying kernel has been necessary to make the closed form stationary probability density compatible with the vanishing boundary conditions imposed to solve the numerical problem. Namely
    \begin{equation}
        \mathbb{P}_{\infty}(x_{1},x_{2})=\frac{1}{\mathcal{Z}}\left[\exp\left(-\frac{\mathrm{E}((x_{1},x_{2});W)}{\sigma^{2}}\right)\exp\left(-\frac{x_{1}^{2}+x_{2}^{2}}{c}\right)\right]\quad{c>0}
    \end{equation}
where $\mathcal{Z}>0$ is a new normalizing constant over the area of $\Omega\subset{\real}^{2}$, now with $(x_{1},x_{2})\in\Omega$ coordinates.

\begin{figure}[th!]
    \centering
    \includegraphics[width=.85\linewidth]{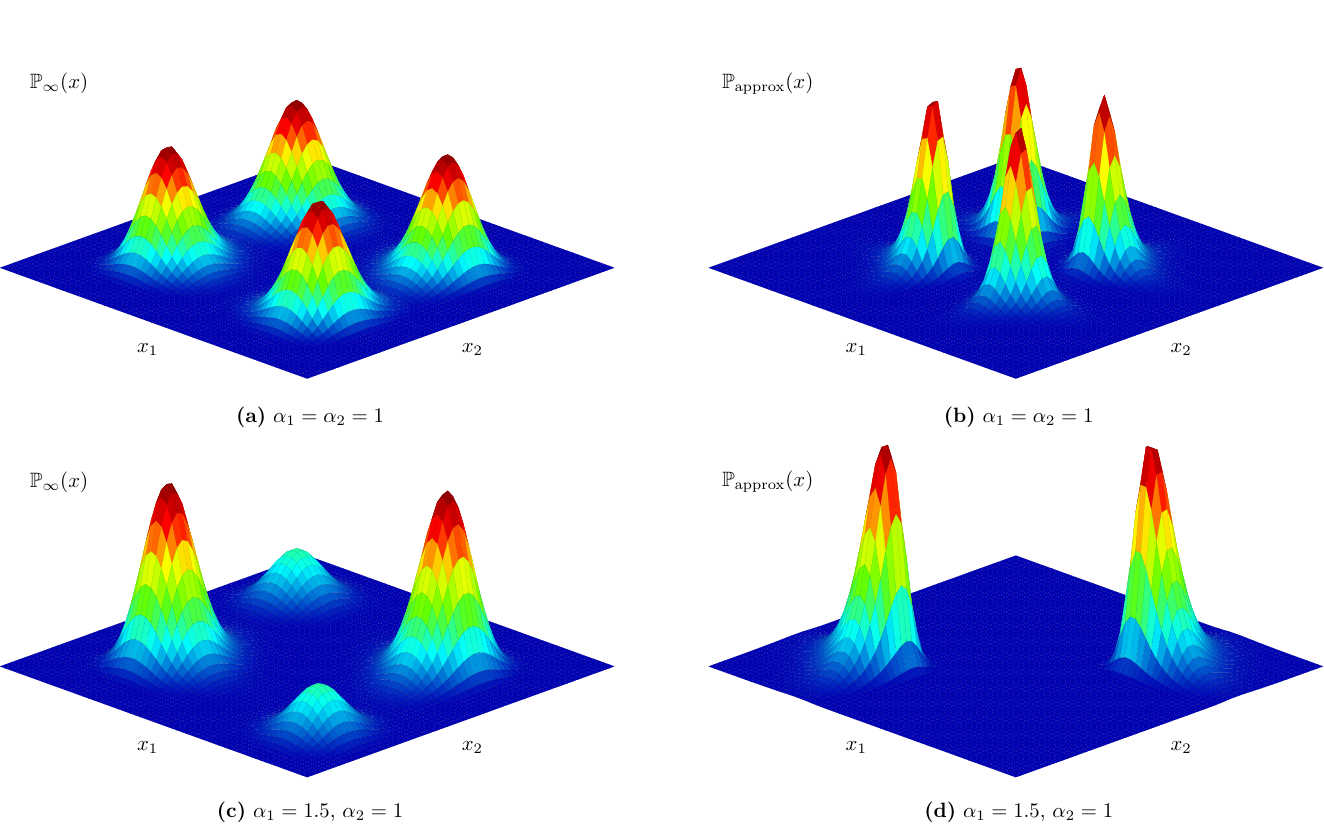}
    \caption{Numerical solution of the Fokker Planck equation and plotting of the stationary probability density~\eqref{eq: SPD} for the (Panels (a)-(b)) classical Hopfield model and (Panels (c)-(d)) IDP Hopfield model, where in the latter case the small probability mass around the shallow minima has been totally absorbed by the deepest minima.}
    \label{fig:PDF}
\end{figure}

We now consider a numerical comparison in the retrieval performance of the classic Hopfield model and the IDP Hopfield model. Both models where instantiated with a network size of $N=1024$ to emulate a large neuronal cluster, and an interval of $t=10$ simulation time was allotted to each network for the retrieval of the desired memory. The chosen amplitude for the noise was $\sigma=8$, and the activation function was a swiftly saturating hyperbolic tangent $\psi(\cdot)=\tanh(10\ \cdot)$. As observable from Fig.~\ref{fig:comparison-additive-multiplicative}(a), the activity in the classic Hopfield network aligns with the provided input during presentation, but then is quickly disrupted by the noise at input offset and none of the memories is retrieved. Instead, from Fig.~\ref{fig:comparison-additive-multiplicative}(b) it is clear that the activity of the IDP Hopfield network quickly aligns with the desired memory, and efficiently switches in accordance with the input characterization. Notice that for lower noise amplitudes $\sigma$ the classic Hopfield model is still able to perform correct retrieval, but performances consistently worsen as the noise amplitude increases.

\begin{figure}[th!]
    \centering
    \includegraphics[width=1\linewidth]{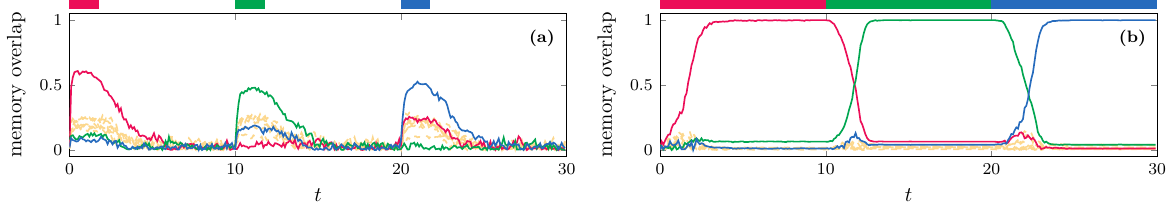}
    \caption{Comparison of the retrieval performance for the classic vs IDP Hopfield model with white noise. (a) In the classic Hopfield model, with the noise quickly disrupts the transient towards any of the memories stored in the synaptic matrix. (b) In the IDP Hopfield model, the noise is instrumental for a swift transition to the memory associated with the dominant saliency weight in the input.}
    \label{fig:comparison-additive-multiplicative}
\end{figure}

In addition, we can observe from Fig.~\ref{fig:comparison-additive-multiplicative}(b) that the transients from one memory to the next are quite smooth, and that the network activity seems to be shortly entrapped in the previous recall at input switching. Thus, it seemed natural to devise an experiment where the input shortly switches its dominant component, a situation we refer to as ``glitch'', and which represents a decoding error from the agent. Surprisingly, we see from Fig.~\ref{fig:glitchy}(b) that the IDP Hopfield model is effective at correcting the glitch on the base of the previous recall and the time length of the glitchy input. Thus, the new IDP model is robust both against environmental noise endogenous to the network and exogenous interference arising from errors in the decoding process. 

\begin{figure}[th!]
    \centering
    \includegraphics[width=1\linewidth]{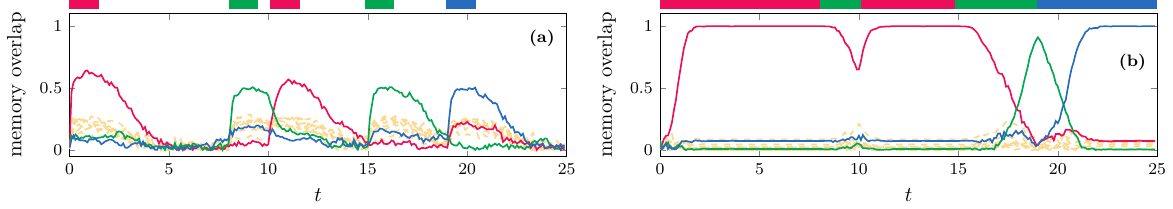}
    \caption{Comparison of the retrieval performance for the classic vs IDP Hopfield model with noise and a glitchy input. (a) Similarly the Fig.~\ref{fig:comparison-additive-multiplicative}(a) the classic Hopfield model activity briefly aligns with the input, but no memory is actually retrieved due to the interference from the noise. (b) The IDP Hopfield model displays the remarkable capability of correcting decoding errors from the input pathways, provided that the error does not last long enough. Specifically, it can be observed that for $t\in[8,10]$ the input changes its dominant component, but the network is able to maintain fixation on the previous recall. Instead, the fixation on the previous recall is broken when the same error is presented for $t\in[15,20]$.}
    \label{fig:glitchy}
\end{figure}

\section*{IDP Framework for the Firing Rate Model}
The firing rate model of associative memory is an alternative to the Hopfield model for the modeling of memories as point attractors in a recurrent neural network. Differently from the Hopfield model, which draws inspiration from the spin glass models of statistical mechanics, the firing rate model strives to offer a more biologically plausible account of memory processes in the brain. The key features that make this model biologically plausible are its positivity and the fact that prototypical memories can be selected as sparse $0$-$1$ vectors, capturing realistic low firing rate activity \citep[Sec. 17.2.6]{gerstner2014neuronal}. 

We now provide more details on the firing rate model of associative memory and expand by means of the IDP paradigm. Define the expected average activity parameter $p\in(0,1)$ and defined random i.i.d. memory vectors $\xi^{1},\dots,\xi^{P}$ with entries
    \begin{equation}
        \mathbb{P}[\xi^{\mu}_{i}=1]=p\qquad\mathbb{P}[\xi^{\mu}_{i}=0]=1-p\qquad\forall\mu=1,\dots,P,\ \forall{i}=1,\dots,N
    \end{equation}
Notice that for neurologically accurate firing rates, the expected average activity parameter is fixed at $p\approx{0.2}$ \citep{gastaldi2021shared}, hence to a low activity regime. By comparison, if we transpose the Hopfield model to have memories in $\{0,1\}$, then the expected average activity parameter would be $p=0.5$. The synaptic matrix is then constructed as in \citep[Sec. 7.4]{dayan2005theoretical} with values
    \begin{equation}
        W = \frac{c}{Np(1-p)}\sum_{\mu=1}^{P}\left(\xi^{\mu}-p\vectorones\right)\left(\xi^{\mu}-p\vectorones\right)^{\top}-\frac{1}{Np}\vectorones\vectorones^{\top}
    \end{equation}
The dynamics of the firing rate model are then defined as follow
    \begin{equation}\label{eq: FR-A}
        \begin{cases}
            \dot{s}(t)=-s(t)+\Gamma(Ws(t) + u(t))\\
            s(0)=s_{0}\in[0,1]^{N}
        \end{cases}
    \end{equation}
where the activation function is diagonal and homogeneous and where $u(t)\in\real^{N}$ is a time-varying external input. In our case, we have chosen
    \begin{equation}
        \Gamma_{i}(z)=\frac{1}{1+e^{-\rho(z_{i}-1/2)}}\quad\forall{i=1,\dots,N},\ z\in\real^{N},
    \end{equation}
the classical sigmoid function centered at $z_{i}=1/2$. Notice that
$\Gamma_{i}:\real\rightarrow(0,1)$, $\forall{i=1,\dots,N}$, and for our
particular choice of initial conditions we have by Nagumo's theorem
\citep{Blanchini2015} that the firing rate dynamics are confined in the
hyper-interval $[0,1]^{N}$.

The dynamical system~\eqref{eq: FR-A} is the classical firing rate model with an additive external input. We are now interested in understanding the corresponding IDP formulation of the firing rate model in light of what has been done for the Hopfield model.
\begin{definition}
    The IDP formulation of the firing rate model for a network of $N\in\mathbb{N}$ units and external input $u(t)\in\real^{N}$ is
        \begin{equation}\label{eq: FR-M}
            \begin{cases}
                \dot{s}(t)=-s(t)+\Gamma(W(u(t))s(t))\\
                s(0)=s_{0}\in[0,1]^{N}
            \end{cases}
        \end{equation}
    where the input-modulated synaptic matrix is
        \begin{align}
            W(u) &= \frac{c}{Np(1-p)}\sum_{\mu=1}^{P}\alpha_{\mu}(u)\left(\xi^{\mu}-p\vectorones\right)\left(\xi^{\mu}-p\vectorones\right)^{\top}-\frac{1}{Np}\vectorones\vectorones^{\top}\\
            \alpha_{\mu}(u)&=\frac{1}{Np}{\xi^{\mu}}^{\top}u(t)
        \end{align}
\end{definition}
The presented model has been studied only numerically for fairly mixed inputs, i.e., inputs $u=\sum_{\mu=1}^{P}\alpha_{\mu}\xi^{\mu}$ where the saliency weights $\alpha$'s are of the same order of magnitude. We now present a figure depicting the overlaps between each of the models memory and the activity of the network, formally defined as
    \begin{equation}
        m_{\mu}(t)=\frac{1}{Np}s(t)^{\top}\xi^{\mu}\quad{\mu=1,\dots,P}
    \end{equation}

\begin{figure}[th!]
    \centering
    \includegraphics[width=1\linewidth]{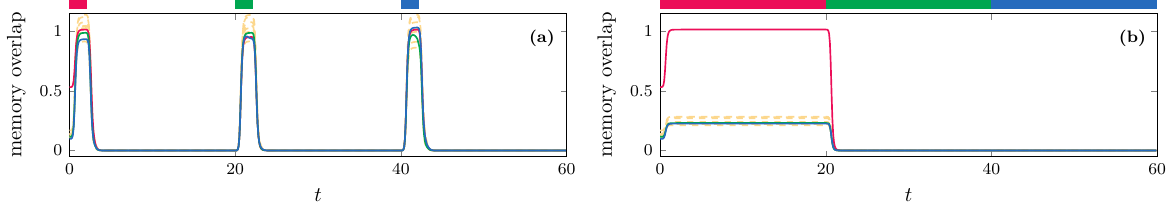}
    \caption{Overlaps for the network memories and the network activity in the case of very mixed inputs. The saliency weights $\alpha$'s are drawn from a uniform distribution $\mathcal{U}[0,4]$, and the dominant weight is ad hoc multiplied by a constant so that is at least 3 times larger than all other saliency weights. (a) Overlap activity for the firing rate system~\eqref{eq: FR-A}. As observable, the activity is shot in a state that is in perfect overlap with the memories (probably along the bisector) and then quickly decays to the inactivation state $\vectorzeros$. (b) Overlap activity for the firing rate system~\eqref{eq: FR-M}. As observable, the network reliably recovers the first memory, but when the synaptic matrix is altered to favor a transition towards the second memory, the network activity remains entrapped in the inactivation state $\vectorzeros$.}
    \label{fig:NoNoiseTraj}
\end{figure}

From Fig.~\ref{fig:NoNoiseTraj} it is clear that the inactivation state $\vectorzeros$ is a highly attractive spurious state of the system. Probably, due to the very highly mixed nature of the input and its dominance over synaptic dynamics, at input onset the activity of the additive firing rate system is shot right along the bisector of the hyper-interval $[0,1]^{N}$. This point in state space is probably well separated from the basin of attraction of any individual memory, as one can see from the quick decay to inactivation at input offset. Instead, we see that the IDP firing rate model is capable of correctly retrieving the first memory, but then inevitably falls into inactivation during the transient to the second memory. The falls into inactivation could well be a further proof of the width of the basin of attraction associated to the spurious $\vectorzeros$. Therefore, it seems natural to wonder whether small endogenous perturbations are enough to drive the system from the inactivation state to the desired memory. The stochastic dynamics associated to the IDP firing rate model are
    \begin{equation}\label{eq: FR-M-S}
            \begin{cases}
                \dot{s}(t)=-s(t)+\Gamma(W(u(t))s(t))+\eta(t)\\
                s(0)=s_{0}\in[0,1]^{N}
            \end{cases}
        \end{equation}
where the white noise $\eta(t)$ follows the statistics
    \begin{align}
        \mathbb{E}[\eta(t)]&\equiv{0}\\
        \mathbb{E}[\eta(t)\eta(\tau)^{\top}]&=\sigma^{2}I_{N}\delta(t-\tau)
    \end{align}
    
\begin{figure}[th!]
    \centering
    \includegraphics[width=1\linewidth]{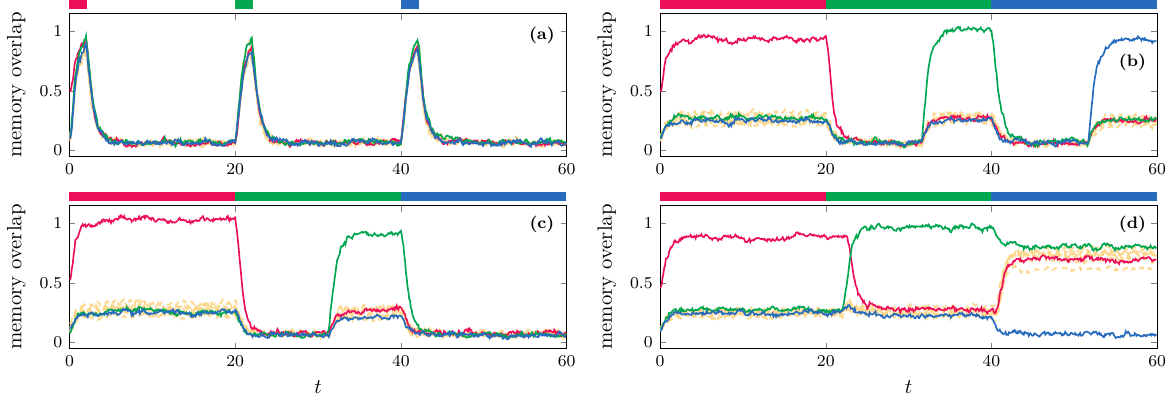}
    \caption{Overlaps for the network memories and the network activity in the case of very mixed inputs and the influence of white noise. (a) Overlap activity for the firing rate system~\eqref{eq: FR-A} with the addition of white noise. As observable, the overlap activity is the same as that of Fig.~\ref{fig:NoNoiseTraj}.(a), with the network activity still being shot along the bisector of $[0,1]^{N}$ at input onset. At input offset, the noise amplitude is not strong enough to drive the system towards one of the memories. (b-d) Overlap activity for the IDP firing rate system~\eqref{eq: FR-M-S}. In panel (b), the noise in combination with the synaptic modulation is functional at inducing the correct switching among memories after a brief period of inactivity. Instead, panel (c-d) present less ideal situation where switching between memories fails in some way.}
    \label{fig:NoiseTraj}
\end{figure}

The IDP firing rate model has been investigated only numerically, and as illustrated in Fig.~\ref{fig:NoiseTraj}.(b-d) there is high variability in the retrieval performance. This hints at the fact that, due to the activity being confined in a relatively small hyper-interval, the model is very sensitive to parameters changes and external influences. Future work should address the theoretical aspect of the model.

\end{document}